\title{Optimized 2-Approximation of Treewidth}
\author{Mahdi Belbasi}{Pennsylvania State University}{hidden email(s)}{}{}
\author{Martin F\"urer}{Pennsylvania State University}{fhs@psu.edu}{}{}
\author{Medha Kumar}{Pennsylvania State University}{mkumar@psu.edu}{}{}
\authorrunning{M. Belbasi, M. F\"urer, and M. Kumar} 
\keywords{Treewidth, Tree decomposition, Approximation algorithm, Fixed parameter tractable, FPT, Graph theory} 
\newcommand{\td}{\text{tree decomposition }}
\newcommand{\gtd}{\text{grouped tree decomposition }}
\newcommand{\gtds}{\text{grouped tree decompositions }}
\newcommand{\tw}{\text{treewidth }}
\begin{document}
\maketitle              
\begin{abstract}
 This paper presents a linear FPT algorithm to find a tree decomposition with a 2-approximation of
the treewidth with a significantly smaller exponential dependence on the treewidth. The algorithm runs in time $O(poly(k) 81^k n)$, compared to Korhonen's running time of $O(poly(k) 1782^k n)$ = $O(2^{10.8k} n)$.

\keywords{FPT Algorithms  \and Treewidth \and Tree Decomposition \and Approximation Algorithms.}
\end{abstract}

\section{Introduction}

Since its introduction by Robertson and Seymour~\cite{robertson1984graph, RobertsonS86}, the treewidth -- a measure of how `tree-like' a graph is -- has played a significant role in computer science. When a near-optimal tree decomposition is known, many hard problems can be solved efficiently for graphs with a small treewidth. Unfortunately, computing an optimal tree decomposition and finding the treewidth of a graph are both NP-hard problems~\cite{arnborg1987complexity}. 


Finding an optimal tree decomposition and the exact treewidth was proven to be in FPT~\cite{bodlaender1996linear,korhonen2022improved}. Bodlaender~\cite{bodlaender1996linear} gave the first exact FPT algorithm running in time $2^{\mathcal{O}(k^3)}n$~\cite{bodlaender1996linear} (where $k$ denotes the optimal treewidth, and $n$ is the number of vertices), however the prohibitive running time shifted focus back to approximate solutions. Over time, the primary emphasis shifted from optimizing the dependence on $n$, to reducing the exponential coefficients of $k$~\cite{BelbasiF2022, BelbasiF2021leftmost, Korhonen2021}. Both known linear-time algorithms with single-exponential dependence on $k$ \cite{bodlaender2016c,Korhonen2021}, suffer from very large coefficients of $k$ in the exponent. In this paper, we achieve a significant reduction in the exponential coefficient for an algorithm based on Korhonen's algorithm~\cite{Korhonen2021}.

\subsection{Main Result}

Many ideas of our algorithm are motivated by Korhonen's algorithm \cite{Korhonen2021,korhonen2023singleexponential} which also achieves a $2$-approximation of the treewidth in time linear in $n$, but with a much larger exponential dependence on $k$. 
The algorithm in \cite{Korhonen2021} attains a decrease in the width of a \td $\mathcal{T}$ by \textit{splitting} a root node $r$ with a bag (denoted by $B_r$) of maximum size. To maintain a tree decomposition, the split is propagated through an unbounded number of nodes in the tree decomposition. However with each split, the algorithm in \cite{Korhonen2021} creates $2$ or $3$ copies of all edited nodes, which slows the algorithm. Our algorithm avoids this drawback by relaxing the maximum degree constraint to allow the \textit{merging} of redundant nodes. To avoid any corresponding increase in running time, we specify the structure of the tree decomposition -- a grouped tree decomposition -- which allows us to obtain the following result. We use the notation $\mathcal{O}^*$ to express an upper bound up to a polynomial factor. With this notation, we write the linear dependency on $n$ behind the $\mathcal{O}^*$ notation, i.e., as $\mathcal{O}^*(81^k)n$.
\begin{theorem}\label{thm:thm1}
  Given a graph $G$, a tree decomposition $\mathcal{T}$ of $G$ with width $\leq 4k + 3$ for some integer $k$, in time $\mathcal{O}^*(81^k)n$, one can either construct a tree decomposition of $G$ with width $\leq 2k + 1$, or conclude that $tw(G) > k$.
\end{theorem}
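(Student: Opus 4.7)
The plan is to follow the overall framework of Korhonen's algorithm, treating the input tree decomposition $\mathcal{T}$ of width at most $4k+3$ as a scaffold and refining it into a tree decomposition of width at most $2k+1$, but replacing Korhonen's node-copying propagation by the merging machinery afforded by a grouped tree decomposition. First I would root $\mathcal{T}$ and reinterpret it as a grouped tree decomposition, then process nodes in a controlled order so that at every step the object of reduction is a single root bag $B_r$ with $|B_r| \leq 4k+4$. The output is assembled incrementally by gluing the outcomes of these local reductions along their shared interfaces.

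The core step is a splitting subroutine applied whenever $|B_r| > 2k+2$. Because $|B_r| \leq 4k+4$, I enumerate every labelling $\lambda : B_r \to \{L, S, R\}$ that is a plausible balanced separator, namely one with $|\lambda^{-1}(S)| \leq k+1$ and $|\lambda^{-1}(L)|, |\lambda^{-1}(R)| \leq 2k+1$. This produces at most $3^{4k+4} = \mathcal{O}(81^k)$ candidates, which is the source of the claimed exponential base. For each candidate, I verify by a dynamic program over the current grouped tree decomposition whether $\lambda^{-1}(S)$ really separates $\lambda^{-1}(L)$ from $\lambda^{-1}(R)$ in $G$; this DP runs in $\mathrm{poly}(k)$ time per candidate per node. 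If some candidate succeeds, I replace $r$ by two children corresponding to the two sides, each with a new bag of size at most $2k+2$; if no candidate succeeds for some bag, I output $\mathit{tw}(G) > k$, whose soundness follows from the standard fact that a graph of treewidth at most $k$ admits a balanced separator of size at most $k+1$ for any designated vertex subset.

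The critical novelty is in how the split is propagated. Korhonen's algorithm copies the children of $r$ (and recursively some descendants) onto the two sides of the separator, paying a multiplicative factor of $2$ or $3$ per split; this is what inflates the base from $81$ to $1782$. My plan is to exploit the relaxed structural constraints of a grouped tree decomposition so that descendants whose bags are already consistent with only one side are retained without duplication, while those straddling both sides are merged into a single grouped node that simultaneously represents the would-be copies. The proof must then establish that the object produced this way (i) remains a grouped tree decomposition, (ii) can be unfolded into a genuine tree decomposition of width at most $2k+1$, and (iii) accumulates only $\mathcal{O}(\mathrm{poly}(k) \cdot 81^k \cdot n)$ total work, via an amortized argument charging each original node of $\mathcal{T}$ a bounded amount of reduction work regardless of how many splits pass through it.

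The main obstacle I anticipate is the correctness of the merging rules. Merging descendants aggressively risks violating the running-intersection property of tree decompositions, while merging too conservatively reintroduces Korhonen's duplication cost. Finding the delicate balance---specifying exactly when two nodes on the split boundary may collapse into one grouped node without losing the tree-decomposition invariants---is the technical heart of the argument. Accordingly, the proof should proceed by maintaining an explicit structural invariant for the grouped tree decomposition that is preserved by every splitting and merging step, combined with a potential-function argument ensuring that the amortized cost per node of $\mathcal{T}$ is $\mathcal{O}(\mathrm{poly}(k) \cdot 81^k)$, yielding the claimed linear dependence on $n$.
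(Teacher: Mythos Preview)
Your proposal has two genuine gaps.

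\textbf{The three-label scheme cannot reach width $2k+1$.} With labels $\{L,S,R\}$ you are performing a two-way split of $B_r$. Even granting $|S|\le k+1$ and $|L|,|R|\le 2k+1$, the two child bags you create are $L\cup S$ and $R\cup S$, of size at most $3k+2$, not $2k+2$ as you state. Worse, the constraint $|L|,|R|\le 2k+1$ need not be satisfiable: the Robertson--Seymour balanced separator only guarantees that each connected component of $G\setminus S$ meets $B_r$ in at most $|B_r|/2$ vertices, and for $|B_r|=4k+4$ a single component can contain $2k+2$ of them. More fundamentally, iterating two-way balanced splits has a fixed point around width $3k$; this is exactly why the paper (and Korhonen) uses a \emph{four}-partition $(C_1,C_2,C_3,S)$, grouping the components into three parts of size at most $|B_r|/2$ each, so that the replacement bags $W_i\cup S$ are strictly smaller whenever $|B_r|>2k+2$. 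The paper's $81^k$ arises because it runs \emph{two regimes}: three labels and table size $3^{4k}=81^k$ while the width is in $[3k+3,4k+3]$, then four labels and table size $4^{3k}=64^k$ once the width drops into $[2k+2,3k+2]$. The bottleneck is the first regime, but the second is indispensable for the $2$-approximation. Your proposal has no analogue of the four-label regime.

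\textbf{The DP is not per-candidate.} You describe enumerating $3^{4k+4}$ labellings of $B_r$ and, for each, running a fresh $\mathrm{poly}(k)$-per-node verification over the tree. That is $\mathcal{O}^*(81^k)\cdot n$ work \emph{per split}, and splits cascade; the amortization you sketch cannot absorb this because each split already touches $\Theta(n)$ nodes in the worst case. The paper avoids this by storing at every node $x$ a single table $A_x$ of size $4^{|B_x|}$ (or $3^{|B_x|}$ in the high-width regime) that records, for \emph{every} partition $P_x$ of $B_x$, the minimum separator size of any compatible legal partition of $V_x$. These tables are built once bottom-up and then maintained incrementally under the Move, Split, and Merge operations; the potential $\phi(\mathcal{T})=\alpha+\beta+\gamma+\delta$ bounds the total number of table recomputations per round, each costing $\mathcal{O}^*(4^{w})$. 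The merging you allude to is not a device to collapse ``would-be copies'' across the two sides of a split; it is the much simpler rule that a main node whose bag is contained in its grandparent's (i.e., no longer the home of any vertex) is absorbed, and the grouped-tree-decomposition structure with distinct intersection children is what makes this absorption cheap. Without the persistent-table picture and the specific merge rule, the linear-in-$n$ bound does not go through.
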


Using Bodlaender's~\cite{bodlaender1996linear} reduction method, as in \cite{Korhonen2021}, one can omit the assumption of a tree decomposition with a width that is at most a $4$-approximation of the treewidth, at a cost of only a factor polynomial in $k$. This gives us the following corollary.

\begin{corollary}
    In time $\mathcal{O}^*(81^k)n$, one can either construct a tree decomposition of $G$ with width $\leq 2k + 1$, or conclude that $tw(G) > k$.
\end{corollary}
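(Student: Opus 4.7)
The plan is to bootstrap Theorem~\ref{thm:thm1} using Bodlaender's reduction technique~\cite{bodlaender1996linear}, following the approach of~\cite{Korhonen2021}. Theorem~\ref{thm:thm1} presupposes a $4$-approximate tree decomposition of $G$ as input; the bootstrap supplies this recursively.

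The argument is by recursion on $n = |V(G)|$. For sufficiently small $n$ (bounded by a function of $k$), the treewidth can be computed exactly using any $f(k)\,n^{\mathcal{O}(1)}$ algorithm, and we either return an optimal decomposition or certify $tw(G) > k$. For larger $n$, I would identify a contractible structure in $G$ -- for example, a sufficiently large matching, or a set of I-simplicial vertices of bounded degree -- and contract it to obtain a smaller graph $G'$ with $|V(G')| \leq \alpha n$ for some absolute constant $\alpha < 1$. Since contraction does not increase treewidth, $tw(G') \leq tw(G)$, so a recursive invocation on $G'$ either returns a tree decomposition $\mathcal{T}'$ of $G'$ of width at most $2k+1$, or certifies $tw(G') > k$, which immediately implies $tw(G) > k$.

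Next, I would lift $\mathcal{T}'$ back to a tree decomposition of $G$ by replacing each contracted vertex in every bag with the (at most two) original vertices it represents. This at most doubles each bag size, yielding a tree decomposition of $G$ of width at most $2(2k+2) - 1 = 4k+3$. Theorem~\ref{thm:thm1} now applies directly and produces either a tree decomposition of $G$ of width $\leq 2k+1$ or a certificate that $tw(G) > k$, completing the recursive step.

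The running time satisfies the recurrence $T(n) \leq T(\alpha n) + \mathcal{O}^*(81^k)\,n$, which resolves by geometric summation to $T(n) = \mathcal{O}^*(81^k)\,n$. The main technical obstacle lies in ensuring that each recursive step -- identifying the contractible structure, performing the contraction, and reconstructing afterwards -- runs in $\mathcal{O}^*(\text{poly}(k))\,n$ time with no hidden blow-up in the exponential base; this is precisely the content of Bodlaender's original reduction machinery. A secondary concern is the treatment of graphs too sparse to admit the desired contractible structure, which are handled by auxiliary reduction rules on low-degree or simplicial vertices, and which do not alter the exponential dependence on $k$.
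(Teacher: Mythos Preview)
Your proposal is correct and matches the paper's approach: the paper states the corollary as an immediate consequence of Theorem~\ref{thm:thm1} combined with Bodlaender's reduction method~\cite{bodlaender1996linear}, exactly as in~\cite{Korhonen2021}, and you have simply spelled out the details of that reduction (contract a large matching or set of simplicial vertices, recurse, lift back to a width-$(4k+3)$ decomposition, apply Theorem~\ref{thm:thm1}, and sum the geometric series). The paper itself offers no further argument beyond the citation, so your elaboration is consistent with and more detailed than what appears there.
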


While \cite{branchwidth} finds a $2$-approximation of the branchwidth efficiently, the maximum degree constraint for branch decompositions prevents optimal merging of redundant nodes - which we can leverage to speed-up finding a tree decomposition.

\section{Preliminaries}\label{sec:prelims}

A \td $\mathcal{T} = (V_{\mathcal{T}}, E_{\mathcal{T}})$ of a graph $G = (V, E)$, is a tree where each $y \in V_{\mathcal{T}}$ is associated with a set $B_y \subseteq V$, such that $(1)$ for every edge $e \in E$, between vertices $u$ and $v$, there exists $x \in V_{\mathcal{T}}$ with $u, v \in B_x$; $(2)$ all vertices in $V$ appear in some $B_x$; and $(3)$ for all $x, y \in V_{\mathcal{T}}$ with $z \in V_{\mathcal{T}}$ appearing on the unique path between $x$ and $y$, $ B_{x} \cap B_{y} \subseteq B_z$. We refer to $x \in V_{\mathcal{T}}$ as \textit{node} $x$, and $B_x$ as the \textit{bag} of node $x$. The width of $\mathcal{T}$ is one less than the size of the maximum bag $B_x$ of any $x \in V_{\mathcal{T}}$. Assume $\mathcal{T}$ is rooted (arbitrarily). For a node $x \in V_{\mathcal{T}}$, the parent of $x$ is denoted by $p(x)$. 
The subtree $\mathcal{T}_x = (V_{\mathcal{T}_x}, E_{\mathcal{T}_x})$ of $\mathcal{T}$ rooted at a node $x$ is a \td of a subtree of $G$. The graph induced by $\mathcal{T}_x$ is denoted by $G_x = (V_x, E_x)$, where $V_x = \bigcup_{y \in V_{\mathcal{T}_x}} B_y$ and $E_x$ are the edges in $E$ between vertices of $V_x$.


\begin{definition}\label{def:home}
    Given a rooted \td $\mathcal{T}$, a node $x \in V_{\mathcal{T}}$ is the \emph{home} of a vertex $v$ if $v \in B_x$ and $v\not \in B_{p(x)}$. We denote the home of $v$ by $x = h(v)$.
\end{definition}

\begin{definition}
    A \emph{unique home tree decomposition} $\mathcal{T}$ is a rooted tree decomposition where every non-root node $x$ is the home (Def~\ref{def:home}) of exactly one vertex $v \in V$, and the root node $r$ is the home of at least one vertex $v \in V$.
\end{definition}

\begin{lemma}\label{lemma:uhtd}
    Given a \td $\mathcal{T} = (V_{\mathcal{T}}, E_{\mathcal{T}})$ of a graph $G$, with width $k$,  one can construct a unique home tree decomposition $\mathcal{T}' = (V_{\mathcal{T}'}, E_{\mathcal{T}'})$ of $G$ of the same width $k$ in $\mathcal{O}(k^2 |V_{\mathcal{T}}|)$ time.
\end{lemma}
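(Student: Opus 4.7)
The plan is to process $\mathcal{T}$ by rooting it at an arbitrary node $r$ with $B_r \neq \emptyset$ (such a node exists since $\bigcup_y B_y = V$), then computing, for each node $x$, the candidate home set $H_x = B_x \setminus B_{p(x)}$ (with $H_r = B_r$). A unique home tree decomposition results once $|H_x| = 1$ for every non-root $x$ and $|H_r| \geq 1$. I would achieve this via two local operations: a \emph{split} that replaces a node with $|H_x| \geq 2$ by a short chain, and a \emph{contraction} that deletes a node with $|H_x| = 0$.

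For splitting a non-root node $x$ with $H_x = \{v_1, \ldots, v_t\}$ and $t \geq 2$, I replace $x$ by a chain $y_1, \ldots, y_t$ with $p(y_1) = p(x)$, $p(y_i) = y_{i-1}$ for $i \geq 2$, and the children of $x$ reattached to $y_t$. The bags are set to $B_{y_i} = (B_x \setminus H_x) \cup \{v_1, \ldots, v_i\}$, so $B_{y_t} = B_x$ and each $y_i$ is the home of exactly $v_i$. Every new bag is a subset of $B_x$, so the width is preserved; the tree decomposition axioms hold because the new bags form an increasing chain contained in $B_x$, and the running intersection on the rest of the tree is inherited. For contracting a non-root node $x$ with $B_x \subseteq B_{p(x)}$, I delete $x$ and reparent its children to $p(x)$. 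The running intersection property of $\mathcal{T}$ forces $B_c \cap B_{p(x)} \subseteq B_x$ for every child $c$ of $x$ (since any common vertex lies on the $c$-to-$p(x)$ path), hence $B_c \setminus B_{p(x)} = B_c \setminus B_x$ and in particular $H_c$ is unchanged by the contraction.

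For the running time, I first compute $H_x$ for every $x$ by direct pairwise comparison of $B_x$ and $B_{p(x)}$, costing $O(k^2)$ per node and $O(k^2 |V_{\mathcal{T}}|)$ in total. The crucial observation is that the $H$-values of all other nodes are preserved under both operations (the split keeps $B_{y_t} = B_x$, and the contraction preserves children's homes as argued), so a single pass suffices with no cascading recomputation. Each split of a node with $t$ homes builds $t-1$ new bags of size at most $k+1$, costing $O(tk)$; summed over the tree this is $O(k \cdot |V|) = O(k^2 |V_{\mathcal{T}}|)$, using $|V| \leq (k+1)|V_{\mathcal{T}}|$. Each contraction costs $O(\deg(x))$ for reparenting, contributing $O(|V_{\mathcal{T}}|)$ in total.

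The main subtlety I expect is verifying that the two transformations commute well enough to permit an order-independent pass. The identities $B_c \setminus B_{p(x)} = B_c \setminus B_x$ and $B_{y_t} = B_x$ show precisely that no transformation invalidates the precomputed $H$-sets elsewhere, which is what makes the $O(k^2 |V_{\mathcal{T}}|)$ bound attainable. A minor detail is handling chains of consecutive contractions (e.g.\ $x$ and $p(x)$ both being zero-home); this is resolved by performing contractions top-down, or equivalently via a path-compression style pass, neither of which asymptotically inflates the running time.
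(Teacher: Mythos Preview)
Your proposal is correct and follows essentially the same approach as the paper: contract nodes with $B_y \subseteq B_{p(y)}$ and replace nodes with $|B_y \setminus B_{p(y)}| \geq 2$ by a short chain of forget nodes, each a subset of the original bag. The only cosmetic differences are that the paper keeps the original node and inserts the chain above it (bags shrinking toward the parent) whereas you rebuild the chain with bags growing toward the child, and that you add an explicit non-interference argument (the invariance of the $H$-sets under both operations) which the paper leaves implicit in its single traversal.
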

\begin{proof}
    For $\mathcal{T}$ to be transformed into a unique home tree decomposition, we need to ensure that each non-root node in $\mathcal{T}$ is the home of exactly one vertex. We can create the unique home tree decomposition $\mathcal{T}'$ by traversing $\mathcal{T}$ and handling each pair $x,y$ of nodes of $\mathcal{T}$, where $x$ is the parent of $y$,
    as follows:
    \begin{itemize}
        \item If $B_y \subseteq B_x$, we add the child nodes of $y$ as child nodes of $x$ and delete node $y$.
        \item If $B_y \setminus B_x = \{v_1, \dots, v_j\}$ for some $j > 1$, we add a string of forget nodes between $x$ and $y$ such that the bag of each node is exactly $1$ vertex smaller than the bag of its child. Let these forget nodes be $\{y_1, \dots, y_{j-1}\}$. Then, $B_{y_1} = B_{y} \setminus \{v_1\}$, $B_{y_i} = B_{y_{i+1}}\setminus \{v_i\}$ and $B_{y_{j-1}} = B_x \cup B_y$.
        \item Else if $B_y \setminus B_x = \{v_1\}$, then the unique home property is satisfied and we do nothing.
    \end{itemize}
    We can traverse $\mathcal{T}$ in time linear in the number of nodes. Comparing bags of each pair of nodes takes time linear in the size of their bags. At each instance of the second case, we create at most $k$ nodes. 
    Creating $k$ nodes with bags of size $\mathcal{O}(k)$ can be done in time $\mathcal{O}(k^2)$ resulting in an overall time of 
    $\mathcal{O}(k^2|V_{\mathcal{T}}|)$ to construct a unique home tree decomposition $\mathcal{T}'$ from $\mathcal{T}$. 
    \qed
\end{proof}

\begin{definition}
    A \emph{grouped tree decomposition} $\mathcal{T}$ of a connected graph, is a rooted tree decomposition such that:
    \begin{enumerate}
        \item all nodes in $V_{\mathcal{T}}$ are either \emph{main nodes} ($V_{\mathcal{T}}^M$) or \emph{intersection nodes}($V_{\mathcal{T}}^I$),
        \item the root node and all leaves of $\mathcal{T}$ are main nodes, the root node $r$ has a non-empty bag, 
        \item for all $x \in V_{\mathcal{T}}^M \setminus \{r\}$ and $y \in V_{\mathcal{T}}^I$, $p(x) \in V_{\mathcal{T}}^I$ and $p(y) \in V_{\mathcal{T}}^M$,
        \item for all $x, y \in V_{\mathcal{T}}^M \setminus \{r\}$ with $x = p(p(y))$, $|B_y\setminus B_x| = 1$, $B_x \neq \emptyset$, and $B_{p(y)} = B_x \cap B_y$  i.e., $y = h(v)$ for exactly one $v \in V$, and
        \item for all $x, y \in V_{\mathcal{T}}^I$, $B_x \subseteq B_{p(x)}$ and if $p(x) = p(y)$ then $B_x \neq B_y$.
    \end{enumerate}
\end{definition}

For $x \in V_{\mathcal{T}}^M$, $y \in V_{\mathcal{T}}^I$ with $x = p(y)$ are referred to as the \emph{children} or \emph{intersection children} of $x$. Nodes $z$ with $x = p(p(z))$ are referred to as the \emph{grandchildren} of $x$. In a connected graph, $x$ can have at most $2^{|B_x|} - 1$ children. Throughout, we assume a global ordering over $V$.

\begin{lemma}\label{l2}
    Given a \td $\mathcal{T}^* = (V_{\mathcal{T}^*}, E_{\mathcal{T}^*}) $ of a graph $G = (V, E)$ with width $k$ and size $|V_{\mathcal{T}^*}| = O(n)$, one can construct a grouped tree decomposition of $G$ with width $k$, in time $O(k^2 n)$, where $n = |V|$.
\end{lemma}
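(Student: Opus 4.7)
My plan is to produce the required grouped tree decomposition in two stages. First, I invoke Lemma~\ref{lemma:uhtd} to transform $\mathcal{T}^*$ into a unique home tree decomposition $\mathcal{T}'$ of $G$ of width $k$, using $O(k^2 n)$ time. The unique-home property immediately guarantees $|B_c \setminus B_x| = 1$ for every non-root node $c$ of $\mathcal{T}'$ with parent $x$, which is exactly what condition~(4) of a grouped tree decomposition demands between a main node and its main grandparent. If the root of $\mathcal{T}'$ happens to have an empty bag, I first re-root at any node with a non-empty bag; such a node exists whenever $V \neq \emptyset$.

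In the second stage, I label every node of $\mathcal{T}'$ as a \emph{main} node and introduce intersection nodes by grouping. For each main node $x$ with children $c_1,\ldots,c_t$ in $\mathcal{T}'$, I compute the intersection sets $S_i := B_x \cap B_{c_i}$, group the $c_i$'s according to the value of $S_i$, and, for every distinct set $S$ appearing among them, create a single intersection node $z_S$ with bag $B_{z_S} = S$, make $z_S$ a child of $x$, and re-attach every $c_i$ with $S_i = S$ as a child of $z_S$.

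The main obstacle is verifying that this merging preserves the tree decomposition axioms. Edge and vertex coverage are immediate since no bag is shrunk. The connectivity axiom requires a short argument: for any vertex $v$ that appears in two subtrees hanging below merged intersection siblings of the same main node $x$, the original connectivity in $\mathcal{T}'$ forces $v \in B_x$, so $v$ lies in their common intersection set $S$, hence $v \in B_{z_S}$, and the path between those subtrees remains $v$-connected in the new tree. The remaining grouped-TD conditions fall out directly: condition~(3) from the insertion pattern, condition~(5) from $B_{z_S} \subseteq B_x$ together with the fact that grouping removes duplicate-bag siblings, and condition~(4) from $B_{z_S} = B_x \cap B_{c_i}$ combined with the unique-home property. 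Width is preserved because $|B_{z_S}| \le |B_x| \le k+1$.

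For the time bound, the first stage already costs $O(k^2 n)$, and the second stage must compute and group the $S_i$ for all $O(n)$ parent-child pairs within the same budget. The delicate point is that a naive comparison-based sort of the intersection sets would incur an unwanted $\log n$ factor; I would avoid this by performing a single global radix sort on the list of triples $(x, S_i, c_i)$, treating each $S_i$ as a sorted string of length at most $k+1$ over the global vertex order. Since the combined length of all the $S_i$ is $O(kn)$ and vertex identifiers fit in a machine word, this sort runs in $O(kn)$ time, yielding an overall $O(k^2 n)$ bound.
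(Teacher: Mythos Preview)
Your proposal is correct and follows essentially the same two-stage approach as the paper: first pass to a unique home tree decomposition via Lemma~\ref{lemma:uhtd}, then insert one intersection node per distinct set $B_x\cap B_{c_i}$ between each main node $x$ and its children. The only difference is in how equal intersection sets are detected: the paper builds, for each main node $x$, a temporary depth-$|B_x|$ binary trie indexed by the global vertex order (giving $O(k)$ lookup/insert per child), whereas you perform a single global radix sort on the triples $(x,S_i,c_i)$; both choices lead to the same $O(k^2 n)$ bound.
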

\begin{proof}
    Using Lemma~\ref{lemma:uhtd}, we first convert $\mathcal{T}^*$ into a unique home tree decomposition $\mathcal{T}$, in time $\mathcal{O}(k^2\, n)$.
    Since the unique home property is satisfied in the tree decomposition $\mathcal{T}$, to construct a grouped tree decomposition $\mathcal{T}'$, we must ensure that the additional structural properties of intersection nodes are satisfied. Given main nodes $x$ and $y$ in $\mathcal{T}$, where $x$ is the parent of $y$, we construct $\mathcal{T}'$ as follows:
    \begin{itemize}
        \item If there exists an intersection child node $z$ of $x$, such that $B_z = B_x \cap B_y$, we add $y$ as a child of $z$.
        \item Else create an intersection child node $z$ of $x$ such that $B_z = B_x \cap B_y$ and $p(y) = x$.
    \end{itemize}
    Searching for an intersection child node $z$ such that $B_z = B_x \cap B_y$ can take time $O(2^k)$. However, since we keep a global order over all vertices in $V$, we might as well construct a temporary binary tree of intermediate nodes -- with each layer representing a vertex (according to the global order) in $B_x$ that is either included (right child) or excluded (left child) in the intersection children below. We add nodes to this binary tree only when we need to add a path to a newly created intersection node below $x$. Using this temporary tree we can check for existing intersection nodes in $O(k)$ time, since $|B_x| = O(k)$ means that the length of the longest path in the tree is at most $O(k)$. Once all children of $x$ have been added below intersection children of $x$, we delete the temporary binary tree and proceed to the next main node in $\mathcal{T}$. 
    As a unique home tree decomposition, $\mathcal{T}$ has at most $n=|V|$ nodes. The time spent on each node $y$ is $\mathcal{O}(k^2)$, as an intersection node with bag size at most $k+1$ is visited, or constructed with an additional $O(k)$ temporary nodes leading to it, resulting in a total time of $\mathcal{O}(k^2\, n)$.
    \qed
\end{proof}

\subsection{Partitions}\label{subsec:ptns}

A crucial step in our algorithm is finding useful partitions of vertex sets $V' \subseteq V$. We allow some parts in a partition to be empty. Partitions of sets $V_1$ and $V_2$ are said to be \textit{compatible} if they agree on the assignment of the elements in $V_1 \cap V_2$. Given $V_1 \subseteq V_2$, the unique partition $P_1$ of $V_1$, compatible with a partition $P_2$ of $V_2$, is said to be the partition \textit{induced} by $P_2$.

\begin{definition}
    A partition $P = (W_1, W_2, W_3, X)$ of $V' \subseteq V$ is a \emph{legal partition} if there are no edges between $W_i$ and $W_j$ for all $i \neq j$.
\end{definition}

If $P_x$ is a legal partition of $B_x$ for some node $x \in V_{\mathcal{T}}$, then there exists at least one legal partition $P^x = (C_1^x, C_2^x, C_3^x, S^x)$ of $V_x$ compatible with $P_x$. $S^x$ is called a separator of $G_x$.
\begin{definition}
For a given tree decomposition $\mathcal{T}$\, and a partition $P^x = (C_1^x, C_2^x, C_3^x, S^x)$ of $V_x$, \emph{the size} of $P^x$ is the size of the separator $S^x$. Furthermore, \emph{the size} of a legal partition $P_x = (W_1^x,W_2^x,W_3^x,X^x)$ of $B_x$, is the minimum size of a compatible legal partition $P^x$ of $V_x$. We denote the size of a legal partition by $\mathrm{size}(P_x)$ or $|P_x|$.
\end{definition}

\begin{definition}\label{def:goodpart}
For a given tree decomposition $\mathcal{T}$, and given nodes $x$ and $y$ such that $x = p(y)$, a legal partition $P_y = (W_1^y,W_2^y,W_3^y,X^y)$ of $B_y$ is \emph{good with respect to a legal partition $P_x = (W_1^x,W_2^x,W_3^x,X^x)$} of $B_x$ if 
\setlength{\parsep}{0.1\parsep}
\setlength{\itemsep}{0.1\itemsep}
\begin{enumerate}
\item $P_y$ is compatible with $P_x$,
\item $\mathrm{size}(P_y)$ is minimal among all legal partitions of $B_y$ compatible with $P_x$, and
\item $|X^y|$ is maximal among the partitions satisfying conditions $(1)$ and $(2)$.
\end{enumerate}
For the root $r$, $P_r$ is \emph{good} if $|W_i^r| + \mathrm{size}(P_r) < |B_r|$ for all $i$ and $\mathrm{size}(P_r) \leq k+1$.

A legal partition $P = (C_1, C_2, C_3, S)$ of $V$ is \emph{good} if $P$ induces a good partition of $B_r$ and for all nodes $x$ and $y$ of $\mathcal{T}$ with $x = p(y)$, the partition of $B_y$ induced by $P$ is good with respect to the partition of $B_x$ induced by $P$.
\end{definition}

Note that at least two sets $W_i^r$ are non-empty for the root $r$. In a good partition $P$ of $V$, the vertices of any connected component of $G[V \setminus S]$ that does not intersect with $B_r$, can be assigned to any part $C_i$. A \emph{split} is a special good partition $P=(C_1, C_2, C_3, S)$ of $V$ that restricts such assignments. For $P$ to be a split, it is not allowed for any node $y$ with partition $P_y$ induced by $P$ to assign $v \in B_y$ to some component $C_i$, if $p(y)$ assigns some vertices to a $C_j$ for $j \neq i$, but none to $C_i$. Further consistencies are enforced if $B_{p(y)} = X^{p(y)}$.

\begin{definition}\label{def:split}
    A \emph{split} of a \td $\mathcal{T}$ is a partition $P = (C_1, C_2, C_3, S)$ of $V$ and a function $a:V_{\mathcal{T}}\rightarrow \mathcal{P}(\{1, 2, 3\})$ with the following properties:
    \begin{itemize}
        \item $P$ is good for $\mathcal{T}$
        \item For every node $x$, and $P_x = (W_1^x, W_2^x, W_3^x, X^x)$ induced by $P$, $P_x$ satisfies:
        \begin{enumerate}[(a)]
            \item $a(x) = \begin{cases}
                \{i \mid W_i^x \neq \emptyset\}, \text{if } \exists i \text{ such that } W_i^x \neq \emptyset \\
                \{ i \} \text{ for one } i \in a(p(x)), \text{otherwise}
            \end{cases}$
            \item $a(x) \subseteq a(p(x))$
        \end{enumerate}
    \end{itemize}
\end{definition}


\begin{lemma}\label{lem:split2k}
    For a graph $G$ with $\text{tw}(G) \leq k$, and a \td $\mathcal{T}$ of $G$ with a root $r$ such that $|B_r| > 2(k+1)$, there exists a split of $\mathcal{T}$.
\end{lemma}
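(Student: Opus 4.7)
The strategy is to construct the split explicitly by locating a small balanced separator via an optimal tree decomposition of $G$, then lifting the resulting root partition to a good global partition of $V$, and finally assembling a compatible assignment $a$.

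Since $\text{tw}(G)\le k$, fix an optimal tree decomposition $\mathcal{T}^*$ of $G$ with bag size at most $k+1$. The standard balanced separator lemma for tree decompositions, applied to $\mathcal{T}^*$ with the indicator weight of $B_r$, yields a bag $S$ of $\mathcal{T}^*$ with $|S|\le k+1$ such that every connected component of $G-S$ meets $B_r$ in at most $\lfloor |B_r|/2\rfloor$ vertices. Because $|B_r|>2(k+1)$ gives $|S|<|B_r|/2$, at least two components of $G-S$ intersect $B_r$. I pack the components touching $B_r$ into three classes $C_1,C_2,C_3$ greedily (always assigning the next component to the currently lightest class measured by $B_r$-intersection, while forcing the first two components into distinct classes). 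Each class ends with $|C_i\cap B_r|\le|B_r|/2$ and at least two are non-empty, so the induced root partition $P_r=(W_1^r,W_2^r,W_3^r,B_r\cap S)$ is legal with $\mathrm{size}(P_r)\le|S|\le k+1$ and $|W_i^r|+\mathrm{size}(P_r)\le|B_r|/2+(k+1)<|B_r|$, i.e., $P_r$ is good at $r$. The assignment of components disjoint from $B_r$, the \emph{hidden} components, is deferred.

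To extend $P_r$ to a good global partition $P$, I process $\mathcal{T}$ top-down: at each node $y$ with parent $x$ I pick a legal $P_y$ compatible with $P_x$ whose separator size is minimum and whose $|X^y|$ is then maximum (such a $P_y$ always exists, as assigning every vertex of $B_y\setminus B_x$ to $X^y$ yields a compatible legal partition). The bag-wise labels assemble into a consistent global $P$, since every vertex inherits its label from the topmost bag carrying it, and by construction every $P_y$ is good with respect to $P_{p(y)}$. I then define $a$ top-down by setting $a(r)=\{i:W_i^r\neq\emptyset\}$ (which has at least two elements), and for each other node $y$ setting $a(y)=\{i:W_i^y\neq\emptyset\}$ when non-empty, or an inherited singleton $\{i\}\subseteq a(p(y))$ otherwise, so (a) holds by construction.

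The main obstacle is verifying the monotonicity $a(y)\subseteq a(p(y))$ in (b). When a witness vertex $v\in W_i^y$ already lies in $B_y\cap B_x$, compatibility forces $v\in W_i^x$ and hence $i\in a(x)$. The delicate case is a vertex $v\in B_y\setminus B_x$ with $v\in C_i$ while $i\notin a(x)$; this can occur only through a hidden component $D$ of $G-S$ whose top node $t_D$ in $\mathcal{T}$ sits beneath an ancestor with bag contained in $S$. The crucial observation is that any $u\in B_r$ reappearing in a bag inside the subtree rooted at such an ancestor would have to lie on the entire path from $r$ through it, hence in that ancestor's bag, hence in $S$; consequently, hidden components trapped below an inactive ancestor never touch $B_r$. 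I exploit this by relabeling each hidden component trapped inside a maximal chain of inactive ancestors to the single index chosen for the inherited singleton of $a$ at the nearest active ancestor. Because components of $G-S$ have no edges leaving them, relabeling preserves legality; because $\mathrm{size}(P_y)$ is invariant under symmetry of the labels, each $P_y$ remains good; and because hidden components do not meet $B_r$, the root bounds are undisturbed. The pair $(P,a)$ is then the required split.
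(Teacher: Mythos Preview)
Your overall strategy coincides with the paper's: obtain a balanced separator of $B_r$ of size at most $k+1$ (you via a bag of an optimal decomposition, the paper via the Robertson--Seymour separator lemma), group the components of $G-S$ into three classes so that the induced root partition is good, then repair the partition and assemble $a$ top-down by relabeling whole components. Two concrete steps, however, do not go through as written.

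First, the lightest-bin greedy does not guarantee $|C_i\cap B_r|\le |B_r|/2$. With components of $B_r$-weight $0.1,\,0.1,\,0.1,\,0.5,\,0.2$ (total $|B_r|$, each at most $|B_r|/2$), after the first three items every bin holds $0.1$, and the $0.5$-item then lands in one of them to give weight $0.6>|B_r|/2$. The prefix packing (fill one class until the next component would exceed $|B_r|/2$, place that single component alone in the second class, and the remainder in the third) is what is actually needed. Second, your repair of condition~(b) treats only the situation ``beneath an ancestor with bag contained in $S$,'' relabeling to ``the inherited singleton.'' But $i\notin a(x)$ can also hold when $x$ is active: some $W_j^x\neq\emptyset$ with $j\neq i$, so $B_x\not\subseteq S$ and $a(x)$ is not a singleton; there is then no chain of inactive ancestors and no inherited singleton to relabel to. The correct move (and the one the paper sketches) is to push $C_i\cap V_y$ into $C_j$ for any $j\in a(x)$; your argument that the offending component misses $B_r$ already follows from $W_i^x=\emptyset$ alone, without the stronger hypothesis $B_x\subseteq S$. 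Relatedly, ``$\mathrm{size}(P_y)$ is invariant under symmetry of the labels'' is not a valid justification, since relabeling one component is not a global permutation of $\{1,2,3\}$. What one can say is that merging $W_i^y$ into $W_j^y$ never increases $\mathrm{size}$, and at the top node $y$ of the relabel the unchanged $P_x$ forces equality by minimality; verifying that the merged partitions remain good at nodes strictly below $y$ needs a further argument that neither your write-up nor the paper spells out in full.
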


\begin{proof}
Assume the size of the root bag $B_r$ is more than $2(k+1)$.
By Lemma~II.2 of\cite{RobertsonS86}, there is a separator $S$ of $B_r$ of size $|S| \leq k+1$ such that every connected component of $G[V \setminus S]$ contains at most $|B_r|/2$ vertices of $B_r$. These connected components can easily be grouped into 3 sets to produce a legal partition $P = (C_1, C_2, C_3, S)$ of $V$ such that $C_i$ is the set of vertices belonging to the connected components in the $i$th set and $|C_i \cap B_r| \leq |B_r|/2$ for all $i$.
As a consequence, for the induced partition $P_r = (W_1^r, W_2^r, W_3^r, X^r)$
of $B_r$, we have $|W_i^r| + |S| \leq \frac{|B_r|}{2} + k+1 < |B_r|$ for all $i$, implying that $P$ is good at $r$.

Now, we show how to modify $P$ to obtain a split.
Let $\mathcal{T}'=(V_{\mathcal{T}'},E_{\mathcal{T}'})$ be a maximal subtree of $\mathcal{T}$ rooted at $r$ such that there is a pair $(P,a)$ with 
\begin{itemize}
    \item 
    $P$ is a legal partition of $V$, and
    \item
    $P$ and $a$ fulfill properties (a) and (b) of Definition~\ref{def:split} for the nodes in $V_{\mathcal{T}'}$.
\end{itemize}
It is straightforward to see that the root $r$ belongs to $V_{\mathcal{T}'}$.
We claim that $V_{\mathcal{T}'} = V_{\mathcal{T}}$.
Otherwise, there would be a node $y$ with parent $x$ such that (a) and (b) are satisfied in $x$ but not in $y$.

Let $(W_1^y, W_2^y, W_3^y, X^y)$ be the partition of $B_y$ induced by $P$. 
First, assume that (a) is violated in $y$.
Then, for all $i \notin a(x)$, but $W_i^y \neq \emptyset$, $P$ can be modified by moving all vertices of $C_i \cap V_y$ from $C_i$ to a $C_j$ with $j \in a(x)$ and removing $i$ from $a(y)$ to satisfy (a) in $y$.

If (b) is violated in $y$, it just means that there is a better partition $P'$ which differs from $P$ only in $V_y$, is of the same minimal size, and has a larger intersection with $B_y$. If this were the case, the split would choose $P'$ instead (by definition) to satisfy (b).
We have obtained a partition $\mathcal{T}''$ that satisfies (a) and (b) in the vertices of $\mathcal{T}'$ and also in $y$, contradicting the optimality of $\mathcal{T}'$. 
Thus $V_{\mathcal{T}'} = V_{\mathcal{T}}$, and we have obtained a split.
\end{proof}

\begin{definition}\label{def:editable}
    Given a \td $\mathcal{T}$, a node $x \in V_{\mathcal{T}}$ is \emph{editable} with respect to a split $P$ of $V$, if the induced partition $P_x$ of $B_x$ has at least two non-empty components $W_i^x, W_j^x$ ($i \neq j$).
\end{definition}

By the definition of a split, this implies that the parent of an editable node is also editable. We define dynamic programming tables stored at each node $x$, which contains the sizes of all legal partitions of $V_x$. We show that it is straightforward to compute these tables bottom-up, and that once the tables have been computed, a split can be easily selected top-down.


\subsection{Data Structure}\label{subsec:ds}

Given a grouped \td $\mathcal{T}$ of a graph $G$, we use dynamic programming to find a good partition and the corresponding good separator of $G$. We build a data structure of dynamic programming tables (DP-Tables) $A_x$ stored at every node $x \in V_{\mathcal{T}}$, to help us find a split. Each $A_x$ is of size $4^{|B_x|}$ and stores an entry for every possible $4$-partition of $B_x$.

\begin{definition}\label{def:ds}
    Given a partition $P_x = (W_1^x, W_2^x, W_3^x, X^x)$ of $B_x$ for node $x$, the DP-table entry $s(A_x[P_x])$ stores the size of $P_x$. If $P_x$ is not legal then $s(A_x[P_x]) = \bot$.
\end{definition}


\subsection{Overview of our algorithm}\label{sec:oa}
\begin{algorithm}
\caption{2-ApproximationOfTreewidth($G, \mathcal{T}, w, k$)} \label{alg1}
\textbf{Input}: $k\in \mathbb{N}$, $G = (V, E)$, and \td $\mathcal{T}$ of $G$ with width $w \leq 4k+3$ 
\begin{algorithmic}
    \State Construct a \gtd $\mathcal{T}$ from $\mathcal{T'}$ following Lemma~\ref{l2}
    \For{$w = 4k+3$ down to $2k+2$ with step size: $-1$}
        \State Initialize a DFS instance on $\mathcal{T}$
        \While{the DFS instance on $\mathcal{T}$ has not finished}
        \If{A node $x$ with maximum bag size $|B_x| = w+1$ is reached}
        \State MOVE $x$ to the root
        \State Find a Split $(C_1, C_{2}, C_{3}, S)$ of $V$ top-down in $\mathcal{T}$,  
        \If{no split exists} 
        \State \Return FALSE
        \EndIf
        \State Edit all editable nodes $x$, i.e., for all $i \in a(x)$ (Def~\ref{def:split}) replace $x$ with nodes $x_i$
        \Statex \hspace{1.5cm} such that $B_{x_i} = (C_i \cap B_x) \cup S^x)$
        \State MERGE main nodes that aren't the home of a vertex with their grandparent.
        \State Create a new root node $r'$ s.t. $B_{r'} = S$ and the $r_i$ descendants of $r'$
        \State Mark all newly created nodes as unvisited
        \EndIf
        \EndWhile
    \EndFor
    \Return $\mathcal{T}$
\end{algorithmic}
\end{algorithm}
Algorithm~\ref{alg1} shows the pseudo-code of the algorithm which works in rounds. Each round splits all nodes with bags of maximum size $w+1$ (constant for that round) until all nodes have bags of size strictly smaller than $w+1$. When the width of the \td falls to below $2k+1$, a split may not be found, and the $2$-approximation is returned. If no split is found for some width $> 2k+1$, then the \tw is known to be greater than $k$. 
The bottleneck in the algorithm is recomputing the large DP-table stored at each editable node after an operation is performed on the grouped tree decomposition $\mathcal{T}$ with root node $r$.

Like Korhonen\cite{Korhonen2021} we use a potential function $\phi(\mathcal{T})$ to analyze the running time. 
$\phi(\mathcal{T})$ is defined to upper bound the number of DP-Table recomputations per round of Algorithm~\ref{alg1}. Consider the function $\phi(\mathcal{T}) = \alpha(\mathcal{T}) + \beta(\mathcal{T}) + \gamma(\mathcal{T}) + \delta(\mathcal{T})$, where:
    \begin{itemize}
        \item $\alpha(\mathcal{T}) = c_{\alpha} \, \sum\limits_{x \in V_{\mathcal{T}}^M} |B_x|\, f(x)$, where $c_{\alpha} > 0$ and
        $f(x)=
        \begin{cases}
        |B_{x} \setminus B_{p(x)}| \text{ if } x \neq r \\
        (|B_x + 1|)/2 \text{ if } x = r
        \end{cases}
        $, 
        \item $\beta(\mathcal{T}) =  c_{\beta}\,\sum\limits_{x\in V_{\mathcal{T}}^M}$ DFS-Status$(x)$, where $c_{\beta} > 0$ and DFS-Status$(x)$ is its visited status ($2$ if unvisited, $1$ if visited, $0$ if closed) in the DFS instance initialized in Alg~\ref{alg1} 
        \item $\gamma(\mathcal{T}) = c_{\gamma}\, |V_\mathcal{T}|$, where $c_{\gamma} > 0$ 
        \item $\delta(\mathcal{T}) = c_{\delta}\, k^2 |V_{\mathcal{T}}^{\max}|$, where $c_{\delta} > 0$
    \end{itemize}

where, $V_{\mathcal{T}}^M$ denotes the main nodes of $\mathcal{T}$ and $c_{\alpha}, c_{\beta}, c_{\gamma}, c_{\delta}$ are appropriate constants. 
Crucial to the running of the algorithm are the Merge, Split and Move operations. These are described and analyzed in the following sections. In Section~\ref{sec:analysis} we prove that $\phi(\mathcal{T})$ is sufficient to bound the running time, thus completing the analysis of Algorithm~\ref{alg1}.

\section{Details of our Algorithm}

\subsection{Node Operations}

The four main node operations -- Initialization, Addition, Subtraction, and Update -- performed on nodes during Algorithm~\ref{alg1} are described in this section. 
\subsubsection{Initialization}
For a one-node \gtd (or a leaf node) $\mathcal{T} = (\{x\}, \emptyset)$, the DP-table $A_x$ can be Initialized for each legal partition $P_x = (W_1^x, W_2^x, W_3^x, X^x)$ of $B_x$ as $s(A_x[P_x]) = |X^x|$. If $P_x$ is illegal then $s(A_x[P_x]) = \bot$. For a node $x$ with children $y_1, \dots y_l$ in $\mathcal{T}$, we first initialize $x$ in a one-node tree decomposition, and then add $y_i$ to $x$ for all $i \in [l]$. For a \gtd with $O(n)$ nodes and width $w$, an initialization takes $\mathcal{O}^*(4^{w})n$.
\subsubsection{Addition}
Given \gtds $\mathcal{T}_1$ with root $x$, and $\mathcal{T}_2$ with root $y$, where one of $x$ or $y$ is a main node and the other is an intersection node, an Addition creates a new \td where $x$ is the parent of $y$ with $\mathcal{T}_2$ under it. We temporarily allow the root of $\mathcal{T}$ to be an intersection node to describe addition, given the following cases:

\paragraph*{Case 1: Main node $y$ and Intersection node $x$}
Addition is done when $B_y = B_x \cup \{v\}$ for some $v \not \in B_x$. For each $P_x = (W_1^x, W_2^x, W_3^x, X^x)$ of $B_x$, there exist four compatible partitions of $B_y$ corresponding to $v$'s assignment to any one of the four parts. Let $P_y = (W_1^y, W_2^y, W_3^y, X^y)$ be a compatible partition such that $s(A_y[P_y])$ is minimal. We update $A_x[P_x]$ as
$s(A_x[P_x]) = s(A_x[P_x]) + s(A_y[P_y]) - |X^x|$ 
\paragraph*{Case 2: Intersection node $y$ and Main node $x$}
Addition is done when $B_y \subseteq B_x$. We assume that $x$ has no other intersection children with a bag equal to $B_y$. Notice that every partition $P_x= (W_1^x, W_2^x, W_3^x, X^x)$ of $B_x$ can be written as $P_x = (W_1^y \cup W_1^{x \setminus y}, W_2^y \cup W_2^{x \setminus y}, W_3^y \cup W_3^{x \setminus y}, X^y \cup X^{x \setminus y})$, using $P_y = (W_1^y, W_2^y, W_3^y, X^y)$ of $B_y$, and $P_{x\setminus y} = (W_1^{x \setminus y}, W_2^{x \setminus y}, W_3^{x \setminus y}, X^{x \setminus y})$ of $B_x \setminus B_y$. Thus $A_x[P_x]$ is updated using $A_y[P_y]$. If $P_x$ is a legal partition, then $s(A_x[P_x]) = s(A_x[P_x]) + s(A_y[P_y]) - |X^y|$  Clearly an Addition can be done in $\mathcal{O}^*(|A_x|) = \mathcal{O}^*(4^{|B_x|})$.


\subsubsection{Subtraction}\label{subsec:subtr}

Given a \gtd $\mathcal{T}_1$ with root $x$, and a node $y$ such that $x = p(y)$, a subtraction removes $y$ and $\mathcal{T}_y$, from $x$. As in addition, there are two cases:
\paragraph*{Case 1: Main node $y$ and intersection node $x$ }
For each $P_x = (W_1^x, W_2^x, W_3^x, X^x)$ of $B_x$, there exist four compatible partitions of $B_y$. Let $P_y = (W_1^y, W_2^y, W_3^y, X^y)$ be a compatible partition such that $s(A_y[P_y])$ is minimal. We update $s(A_x[P_x]) = s(A_x[P_x]) - s(A_y[P_y]) + |X^x|$ 

\paragraph*{Case 2: Intersection node $y$ and main node $x$}
As in addition, we write partition $P_x$ of $B_x$ as $P_x = (W_1^y \cup W_1^{x \setminus y}, W_2^y \cup W_2^{x \setminus y}, W_3^y \cup W_3^{x \setminus y}, X^y \cup X^{x \setminus y})$, using $P_y = (W_1^y, W_2^y, W_3^y, X^y)$ of $B_y$, and $P_{x\setminus y} = (W_1^{x \setminus y}, W_2^{x \setminus y}, W_3^{x \setminus y}, X^{x \setminus y})$ of $B_x \setminus B_y$. If $P_x$ is a legal partition, then $s(A_x[P_x]) = s(A_x[P_x]) - s(A_y[P_y]) + |X^y|$. Clearly a subtraction can be done in $\mathcal{O}^*(4^{|B_x|})$.



\subsubsection{Update}
Given a \gtd $\mathcal{T}$ with nodes $x, y$ such that $p(y) = x$, an Update of $A_x$ is performed when $A_y$ is modified. This can be done efficiently by subtracting $y$ with its previous table from $x$, and then adding $y$ with the modified table to $x$. Clearly, this takes time $\mathcal{O}^*(4^{w})$ for a \gtd with width $\leq w$. 

\subsection{DFS Instance}\label{subsec:dfs}

Each round of Algorithm~\ref{alg1} proceeds with a single DFS instance over the main nodes in $\mathcal{T}$. Notice that the intersection nodes in $\mathcal{T}$ can at most scale the DFS time by a factor of $2$ and can thus be ignored. We define a function DFS-Status over $V_{\mathcal{T}}^M \rightarrow \{0,1,2\}$ that maps all main nodes that are unvisited, pre-visited or post-visited to $2, 1$ or $0$ respectively. When the round begins, all main nodes are marked unvisited. 
Let $r \in V_{\mathcal{T}}^M$ be the root node of $\mathcal{T}$, that we begin our DFS instance from. As the DFS instance progresses, each main node $x$ is handled as follows:
\begin{itemize}
    \item $x$ is satisfies $|B_x|$ not maximum size: The DFS continues as usual. 
    \item $x$ satisfies $|B_x|$ is of maximum size: The DFS instance is interrupted. $x$ is Moved to the root, and a Split operation is performed. If the Split operation is successful, all new nodes are marked unvisited. The DFS resumes from the first pre-visited node $y$ (i.e., $y$ is not edited during the split) that is adjacent to the unvisited newly created nodes. If no pre-visited node exists, then the DFS resumes from any unvisited node. If the Split operation is unsuccessful, the treewidth of $\mathcal{T}$ must be greater than $k$.
\end{itemize}
The decrease in the $\alpha$-potential caused by the Split pays for the added $\beta$-potential of the newly created nodes.

\begin{lemma}\label{lem-DFScorrectness}
    All nodes are marked post-visited when the DFS-instance stops.
\end{lemma}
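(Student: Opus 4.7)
The plan is to show that the DFS instance cannot halt while any main node still carries DFS-Status $1$ (pre-visited) or $2$ (unvisited); the lemma then follows immediately, since these are the only alternatives to DFS-Status $0$ (post-visited).

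First I would fix the invariant that must be preserved across the Move, Split, and Merge operations interrupting the DFS: every main node of the current grouped tree decomposition has a well-defined DFS-Status, and node statuses only transition in the direction $2 \to 1 \to 0$. By inspection, Move merely re-roots the tree and preserves each node's status; Split explicitly marks every newly created node as unvisited (status $2$) while leaving the status of nodes that survive the split unchanged; Merge concerns main nodes that are not the home of a vertex and can be eliminated without orphaning any subtree with respect to the DFS. Hence the invariant holds after every interruption, and no pre-visited node is ever silently deleted.

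Next I would appeal directly to the resumption rule stated in Section~\ref{subsec:dfs}. If after an operation the DFS stack becomes empty but an unvisited node remains, the DFS explicitly resumes from some unvisited node; if newly created unvisited nodes are adjacent to a pre-visited node $y$ (which is precisely what happens at the boundary of a Split, since the new root $r'$ with bag $S$ is attached to the untouched, pre-visited neighborhood of the Moved node), the DFS resumes from $y$. Thus the DFS halts only when there is no node of status $1$ or $2$ from which to continue, i.e., only when every main node has status $0$.

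The main obstacle is verifying carefully that Split and Merge never strand a pre-visited node: one must check that the pre-visited nodes surviving each operation remain reachable through the resumption rule, either because they themselves are re-entered from the top of the stack on backtracking, or because the new unvisited subtree is attached adjacent to them. Once this case analysis on the three interrupting operations is completed, the argument concludes: when the DFS instance stops, no node of status $1$ or $2$ remains, which is exactly the content of Lemma~\ref{lem-DFScorrectness}.
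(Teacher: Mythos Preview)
Your high-level plan---show that the DFS cannot halt while a status-$1$ or status-$2$ node remains---is the right target, but two of your supporting claims are inaccurate and the central step is not actually carried out. First, Move does \emph{not} merely re-root while preserving every node's status: each Rotate may create a chain of new forget nodes and may merge a main node into its grandparent (see Algorithm~\ref{algRotate}), so nodes on the old pre-visited path can disappear or be replaced. Second, your description of where the new root $r'$ sits is wrong: $r'$ (with bag $S$) is placed \emph{above} the copies $r_i$ of the Moved node, not ``attached to the untouched, pre-visited neighborhood of the Moved node''; the pre-visited nodes lie \emph{below} the edited subtree, not adjacent to $r'$.

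The real gap is the step you flag as ``the main obstacle'' but then defer. The resumption rule only guarantees progress if some pre-visited node adjacent to the new unvisited nodes exists; showing that such a $y$ always exists is the entire content of the lemma. The paper's proof supplies exactly this geometric argument: before the interruption there is a chain of pre-visited nodes from the old root $r$ to the node $x$ being split; after Move this chain becomes a path of descendants of the new root; the editable nodes form a connected subtree at the top (using that a non-editable intersection node cannot have an editable main child); hence either the old root $r$ lies inside the edited region and some pre-visited descendant of it survives just below the boundary, or some node on the $r$--$x$ path is outside the edited region and serves as $y$. Your invariant-based sketch does not recover this structure, and without it the conclusion that the DFS always has a valid resumption point does not follow.
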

\begin{proof}
    Clearly, if a node $x$ is not split, the DFS continues as usual. When the DFS reaches a node $x$ that is to be Split, there must be a chain of pre-visited nodes between $x$ and $r$, the previous root. When $x$ is moved to the root, this path of pre-visited nodes forms a path of descendants of $x$. All nodes edited during the split, form a connected sub-tree below the new root $x$. Due to the structure of grouped tree decompositions, no non-editable intersection node can have an editable main node child. If the old root $r$ lies within the editable sub tree, the DFS resumes from some descendant main node $y$ of $r$ that is not edited and is marked pre-visited. Else there must exist a main node $y$ on the path between $r$ and $x$ that is pre-visited and not edited. In either case, when the DFS resumes from $y$, all new unvisited nodes are pre- and post-visited first, before $y$ is post-visited. The DFS beyond $y$ resumes as usual. Thus, all nodes in $\mathcal{T}$ will be post-visited when the DFS instance stops.
\end{proof}

\subsection{Move}\label{subsec:Move}

\begin{algorithm}[ht!]
    \caption{Move($\mathcal{T}, r, x, P, A_{\mathcal{T}}$)}\label{algMove}
    \textbf{Note: } $r, x \in V_{\mathcal{T}}^M$ and $P$ is the path from the root $r$ to $x$.
    \begin{algorithmic}
        \For{$y$ in $P$ from below $r$ to $x$}
            \State $\mathcal{T}$ $\leftarrow$ Rotate($\mathcal{T}, r, y, A_{\mathcal{T}}$)
            \State $r \leftarrow y$
        \EndFor
        \State \Return $\mathcal{T}$
    \end{algorithmic}
\end{algorithm}
s
Once a maximum-size bag has been split, the DFS instance on $\mathcal{T}$ resumes until it finds a node $x$ such that $|B_x| = w+1$. The node $x$ must then be moved to the root of $\mathcal{T}$, and the DP-tables of all nodes on the path $P$ between $r$ and $x$ in $\mathcal{T}$ updated to reflect the new root. This Move, described in Alg~\ref{algMove}, is performed as a series of edge rotations (Alg~\ref{algRotate}) between consecutive nodes $y_i, y_{i+1}$ in $P$.
\begin{algorithm}[ht!]
    \caption{Rotate($\mathcal{T}, r, y, A_{\mathcal{T}}$)}\label{algRotate}
    \textbf{Note: } $r, y \in V_{\mathcal{T}}^M$ and $r = p(p(y))$.
    \begin{algorithmic}
        \State Let $z = p(y)$
        \State Subtract $y$ from $z$ and Update the table $A_r$
        \If{there are no nodes $y'$ such that $z = p(y')$}
            \State Subtract $z$ from $r$ 
        \EndIf
        \If{$|B_r \setminus B_{z_y}| = j > 1$}
            \If{$\not \exists z_y \in V_{\mathcal{T}}^I$ such that $y = p(z_y)$ AND $B_{z_y} = B_y \cap B_r$}
                \State Initialize $z_y$ such that $y = p(z_y)$ and $z_y \in V_{\mathcal{T}}^I$ and $B_{z_y} = B_y \cap B_r$
            \EndIf
            \State Add $j-1$ main nodes $r_1, \dots r_{j-1}$ between $r$ and $z_y$ such that $|B_{r_1}\setminus B_{r}| = 1$, $|B_{r_{i+1}}\setminus B_{r_{i}}| = 1$ for $i \in \{2, \dots, j-2\}$, and $|B_{z_y}\setminus B_{r_{j-1}}| = 1$
            \State Add corresponding intersection nodes $z_{r_i}$ for $i \in [j-1]$ for each node $r_1, \dots r_{j-1}$
            \State Add $r$ to $z_{r_1}$
            \State Update all new nodes up to $z_y$
        \ElsIf{$|B_r \setminus B_{z_y}|$ = 1}
            \State Add $r$ to $z_y$.
        \EndIf
        \State Update $A_y$ to reflect the change in $z_y$. 
        \If{$|B_r \setminus B_y| = 0$}
            \State Merge $r$ into $y$
        \EndIf
        \State \Return $\mathcal{T}$
    \end{algorithmic}
\end{algorithm}
Notice that whenever the bag of the main node is $j > 1$ vertices larger than its new intersection node parent, up to $O(k)$ nodes may be created. However these nodes have bags with sizes decreasing by $1$, which decreases the corresponding DP-table size by a factor of $4$. The cost of creating and updating the tables of these new nodes is thus, at most $4/3$ times the cost of updating the table of the new root $r$. Correspondingly, when an edge is rotated, and a main node is no longer home to any vertices, it must be \textit{merged} into its grandparent. Thus each rotation takes time within a constant factor of the time taken to update a single DP-table, i.e. $4^{|B_r|}$. Furthermore, the number of rotations performed during a round is bounded by the number of steps taken by the DFS-instance (Lemma~\ref{lem:boundRotations}).



\begin{lemma}\label{lem-move-nochange}
    Given a \gtd $\mathcal{T}$, $\alpha(\mathcal{T})$ does not change while Moving a node with a maximum size bag to the root.
\end{lemma}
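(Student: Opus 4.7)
The plan is to reduce the lemma to showing that each individual rotation in Algorithm~\ref{algRotate} preserves $\alpha$, since Algorithm~\ref{algMove} is just a telescoping chain of rotations along the path from the old root down to $x$. In a grouped tree decomposition every non-root main node $y$ satisfies $|B_y \setminus B_{p(p(y))}| = 1$, so $f \equiv 1$ off the root and $\alpha(\mathcal{T})/c_{\alpha}$ collapses to $\tfrac{|B_r|(|B_r|+1)}{2} + \sum_{x \in V_{\mathcal{T}}^M \setminus \{r\}} |B_x|$. This simplification means only the main nodes whose role or whose bag actually changes during a rotation can contribute to $\Delta\alpha$; everything else cancels trivially, so I would localize the analysis to the old root $r$, the new root $y$, and any intermediate main nodes the rotation creates.

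Fix one rotation that swaps the current root $r$ with its main-node grand-child $y$, and set $a = |B_r|$, $b = |B_y|$, $j = |B_r \setminus B_y|$. The grouped structure forces $|B_y \setminus B_r| = 1$, hence $a = b + j - 1$; this single identity drives the whole calculation. For $j \geq 1$ the algorithm promotes $y$ to the root, installs a new intersection node $z_y$ with $B_{z_y} = B_y \cap B_r$, and inserts $j-1$ intermediate main nodes $r_{j-1}, \dots, r_1$ with bag sizes $b, b+1, \dots, a-1$ (each adding exactly one vertex of $B_r \setminus B_y$, each satisfying the unique-home condition). Equality of $\alpha$ then boils down to the identity
\[
\tfrac{b(b+1)}{2} + \sum_{i=b}^{a} i \;=\; \tfrac{a(a+1)}{2} + b,
\]
which is immediate after substituting $\sum_{i=b}^{a} i = (a+b)(a-b+1)/2$ and $j = a - b + 1$. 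The merge case $j = 0$ (so $B_r \subsetneq B_y$ and $b = a+1$) reduces in one line to $b(b+1)/2 = a(a+1)/2 + b$.

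The main obstacle I anticipate is the bookkeeping in the merge case, because I must also verify that the grand-children of $r$ that are re-hung under $y$ keep their $f$-value. This is precisely where the unique-home structure pays off: each such grand-child $w$ has its unique home in $B_w \setminus B_r$, and by uniqueness of homes this vertex is distinct from the home $v$ of $y$, so it still lies outside $B_y = B_r \cup \{v\}$. Thus $|B_w \setminus B_y| = |B_w \setminus B_r| = 1$ and $w$'s contribution is unchanged, with the intersection parent's bag $B_y \cap B_w = B_r \cap B_w$ also unchanged. Once this local $\alpha$-invariance is in hand, composing the identity over all rotations performed by Move finishes the proof.
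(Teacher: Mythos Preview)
Your proposal is correct and follows essentially the same approach as the paper: both reduce Move to a sequence of single rotations and verify, by a case analysis on the relative sizes of $B_r$ and $B_y$, that each rotation leaves $\alpha$ invariant. The only cosmetic difference is that the paper packages the root's contribution $\tfrac{|B_r|(|B_r|+1)}{2}$ as a chain of ``virtual'' unique-home nodes above the root and argues by bijection, whereas you verify the same equality via the explicit identity $\tfrac{b(b+1)}{2} + \sum_{i=b}^{a} i = \tfrac{a(a+1)}{2} + b$; your handling of the merge case $j=0$ (including the check that re-hung grand-children keep $f=1$) matches the paper's $|B_x| < |B_y|$ case.
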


\begin{proof}
    Consider a \gtd $\mathcal{T}$ rooted at a node $r$. Let $\mathcal{T}^M$ be the corresponding unique home tree decomposition of $\mathcal{T}$.
    To make the analysis simpler, the Move operation is analyzed over $\mathcal{T}^M$. Notice that the $\alpha$ potential is defined over all main nodes of $\mathcal{T}$, and analyzing it over $\mathcal{T}^M$ is sufficient.

    Assume that $B_r$ is the home of $\ell + 1$ vertices $v_1, \cdots, v_{\ell + 1}$. 
    For a simpler proof, let us assume that the existing $\alpha$ potential of the root node 
    ($\frac{|B_r|(|B_r|+ 1)}{2} = \sum_{j=1}^{|B_r|} j$) is distributed over a chain of virtual (main) nodes $r_1, \dots, r_{\ell}$ above $r$, such that $r_{\ell}$ is the parent of $r$ with $B_{r_{\ell}} = B_r\setminus\{v_{\ell + 1}\}$. Also, for all $i \in \{1, \dots, \ell-1\}$, $B_{r_i} = B_{r_{i+1}}\setminus\{v_{i+1}\}$. We refer to this constructed tree with virtual nodes as $\mathcal{T}^M_V$. Notice that by adding the virtual nodes, we ensure that the unique home property is maintained at \textit{all} nodes in $\mathcal{T}^M_V$, including the root node $r$.
    
    The Move operation is performed on $\mathcal{T}^M$ to change the current root $r$ to some new node $r'$ with a bag of maximum size. This operation is done step-wise by rotating the edges on the path $P$ between $r$ and $r'$. During the move, the chain of virtual nodes $\{r_1, \dots, r_{\ell}\}$ in $\mathcal{T}^M_V$ is maintained with respect to the current (temporary) root $r''$ of the tree, i.e., the variable $\ell$ changes according to the number of vertices with their home in $r''$, namely $|B_{r''}|$.
    
    Given nodes $x, y$ that occur successively on $P$ such that $x = p(y)$ in $\mathcal{T}^M$, when the edge between $x$ and $y$ is rotated, one of the following must occur:
    \begin{itemize}
        \item $|B_x| < |B_y|$: Since $\mathcal{T}^M$ is a unique home tree decomposition, $B_x \subset B_y$ and $|B_y \setminus B_x| = 1$. To maintain the unique home property, node $x$ must be merged into node $y$. Since the new root $y$ is now home to one additional vertex, $\ell$ grows to $\ell + 1$, and an additional $r_{\ell+1}$ node must be added to the chain of virtual nodes above the root in $\mathcal{T}^M_V$. The $\alpha$ potential given to node $r_{\ell+1}$ is the same as the $\alpha$ potential of $x$. Thus, after the rotation, the $\alpha$ potential of $\mathcal{T}^M_V$ -- and correspondingly, $\mathcal{T}^M$ -- remains unchanged. 
        \item $|B_x| > |B_y|$: Let $|B_x \setminus B_y| = t$ for $t > 1$, $B_y = \{u_0, u_1 \dots, u_s\}$, and $B_x = \{u_1, \dots, u_{s+t}\}$. After the rotation, $x$ is the home of vertices $u_{s+1},\dots,u_{s+t}$. A chain of main nodes, namely $x_{1}, \dots, x_{t-1}$ are added between $y$ and $x$ in $\mathcal{T}^M$, where $B_{x_{t}} = B_{x}\setminus\{u_{s+t}\}$ and $B_{x_{i}} = B_{x_{(i+1)}}\setminus\{u_{(s+i)}\}$ for $i\in\{1,\dots, t-1\}$. Notice that since $|B_y| < |B_x|$ and $y$ is home to fewer vertices, the chain of virtual nodes above $y$ in the new $\mathcal{T}^M_V$ is exactly $t$ nodes shorter than the chain of virtual nodes above $x$ in $\mathcal{T}^M_V$, when $x$ was the root. The $\alpha$ potential of the $t$ virtual nodes $r_{1}, \dots, r_{t-1}$ that are no longer required after $y$ is made the root node of $\mathcal{T}_M$, is exactly the same as the $\alpha$ potential of main nodes $x_{1}, \dots, x_{t-1}$, that are added between root $y$ and node $x$ in $\mathcal{T}_M$ (and correspondingly, $\mathcal{T}^M_V$), to maintain the unique home property. Thus, the $\alpha$ potential of $\mathcal{T}^M$ remains unchanged. 
        \item $|B_x| = |B_y|$: $\mathcal{T}^M$ is a unique home tree decomposition and so $|B_y \setminus B_x| = 1$. Since $|B_x| = |B_y|$ are equal, it follows that $|B_x \setminus B_y| = 1$, and nothing more must be changed in $\mathcal{T}^M$ (or $\mathcal{T}^M_V$), i.e., the $\alpha$ potential remains unchanged.
    \end{itemize}
\end{proof}

\begin{lemma}\label{lem:betadecrease}
    The $\beta$-potential decreases by $1$ per step in the DFS-Instance. 
\end{lemma}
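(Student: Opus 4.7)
The plan is to reduce the lemma to inspecting a single elementary DFS transition and to observe that the potential function $\beta$ changes by exactly the change of one node's status. First I would invoke the observation in Section~\ref{subsec:dfs} that the DFS-instance is defined over the main nodes in $\mathcal{T}$ (intersection nodes only inflate the traversal time by a constant factor and do not contribute to $\beta$). Thus each step of the DFS-instance changes the DFS-Status of exactly one main node, and we only need to analyse how a single such status change affects $\sum_{x \in V_\mathcal{T}^M} \text{DFS-Status}(x)$.

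Next I would enumerate the possible status transitions. A DFS step is one of two kinds: (i) a descent from a currently pre-visited main node to a previously unvisited main descendant, which changes the descendant's status from $2$ (unvisited) to $1$ (pre-visited); or (ii) a backtrack from a pre-visited main node whose subtree has been fully explored, which changes its status from $1$ (pre-visited) to $0$ (post-visited). In both cases, exactly one main node's status decreases by $1$ and no other status changes, so $\sum_{x \in V_\mathcal{T}^M} \text{DFS-Status}(x)$ decreases by $1$, hence $\beta(\mathcal{T})$ decreases by $c_\beta$, which is the normalized ``$1$'' in the statement.

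The main potential obstacle, and the point that I would be careful to address explicitly, is the case when the DFS reaches a maximum-size bag and is interrupted in order to perform a Move and a Split. During that interruption, new main nodes may be created and marked unvisited, so $\beta$ can actually \emph{increase}. I would emphasize that such interruptions are not counted as DFS steps; the corresponding increase in $\beta$ is paid for by the drop in the $\alpha$-potential caused by the Split, as already indicated in Section~\ref{subsec:dfs}. The lemma speaks only of the ordinary DFS steps happening between two such interruptions, and for those the per-step decrease is immediate from the definition of DFS-Status.

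Once the notion of a ``step'' is pinned down to a single status transition in the DFS traversal on $V_\mathcal{T}^M$, the result follows directly from the values $2$, $1$, $0$ assigned to the three possible DFS states, so I do not expect any genuine technical difficulty beyond making this bookkeeping precise.
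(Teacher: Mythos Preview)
Your proposal is correct and follows essentially the same argument as the paper: a DFS step changes one node's status from unvisited to pre-visited or from pre-visited to post-visited, and either transition lowers its DFS-Status (and hence $\beta$) by one unit. Your additional remarks about intersection nodes and about interruptions for Move/Split not counting as DFS steps are consistent with the paper's setup but go beyond what the paper's own proof spells out.
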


\begin{proof}
    In each step of the DFS-Instance, the DFS moves to a different node. The label of the node it moves to then changes from unvisited to pre-visited, or pre-visited to post-visited. Both these changes mark a decrease in the DFS-Status of the node it moves to -- thus decreasing the $\beta$-potential of $\mathcal{T}$.  
\end{proof}

\begin{lemma}\label{lem:boundRotations}
    The number of rotations performed in a round is bounded by the number of steps in the DFS-Instance.
\end{lemma}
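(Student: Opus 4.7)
The plan is to use an amortized analysis with the potential function $\Phi(\mathcal{T}) = $ the depth of the current DFS position in the main-node subtree, measured from the current root of $\mathcal{T}$. At the start of a round the DFS sits at the root, so $\Phi = 0$; by Lemma~\ref{lem-DFScorrectness} the round terminates with every node post-visited, so $\Phi = 0$ again, hence the total change in potential over the round is zero.

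I first verify the per-event amortized costs. A down-step of the DFS increases $\Phi$ by $1$ at actual cost $1$ (amortized cost $2$), while an up-step decreases $\Phi$ by $1$ at actual cost $1$ (amortized cost $0$). For a Move operation (Algorithm~\ref{algMove}) that relocates a node $x$ at depth $d$ to the root, the $d$ edge rotations executed by Algorithm~\ref{algRotate} bring $x$ closer to the root one level at a time while the DFS position remains conceptually at $x$, so each rotation decreases $\Phi$ by exactly $1$ against actual cost $1$, giving amortized cost $0$ per rotation. Thus the potential accumulated by the DFS's descent to $x$ fully pays for the rotations of the Move.

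The delicate step is the DFS resume after the Split: the position jumps from $x$ (the new root, at depth $0$) to a pre-visited, non-edited node $y$ at depth $d_y$ in the post-split tree, raising $\Phi$ by $d_y$ without incurring any DFS step. I will argue that $\sum d_y$, summed over all resumes in the round, is bounded by (a constant multiple of) the number of up-steps taken by the DFS. The key observation is that every resumed $y$ is pre-visited, and the chain of ancestors of $y$ in the current tree at the time of the resume consists of nodes that, by Lemma~\ref{lem-DFScorrectness}, must be post-visited before the round ends, each post-visit being an up-step; a careful charging of each unit of $d_y$ against one of these subsequent up-steps, arranged so that no up-step is charged for two distinct resumes (using the chain structure to serialize the charges), yields the desired bound.

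Summing the amortized costs over the round makes the potential telescope to zero, so the total rotation cost is at most the total DFS-step cost, establishing the lemma. The main obstacle is the last step: the bookkeeping for the resume jumps, which requires exploiting the fact that the ancestors of successive resume points are, after the Splits and edits in between, eventually processed only once by the DFS, so that the up-steps they produce suffice to amortize the cumulative $\sum d_y$ without any double counting.
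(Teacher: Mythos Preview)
Your amortized analysis via $\Phi =$ depth of the DFS position is far more elaborate than what the paper does; the paper's proof is a two-sentence assertion that, since the DFS is what locates every maximum-size bag, the rotations incurred reaching those bags are (up to a constant) bounded by the DFS step count.  So you are attempting to supply rigor the paper does not, which is laudable, but the attempt does not close.

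The gap is exactly where you flag it.  You want to charge each unit of the resume-jump increase $d_y$ to a future up-step on one of $y$'s current ancestors.  Those ancestors, however, are the freshly created unvisited nodes from the Split that just occurred, and after resuming at $y$ the DFS descends into $y$'s subtree \emph{first}.  If the next maximum-size bag $x'$ lies in that subtree, the DFS triggers another Move and Split before any of those ancestors is touched.  The Move re-roots the tree at $x'$, pushing those ancestor nodes below $y$; the subsequent Split may then declare them editable and replace them by copies.  The specific up-steps you are charging against therefore need not ever happen.  Your phrase ``using the chain structure to serialize the charges'' does not explain how a charge whose target node has been destroyed is redirected, nor why the replacement copies are not already fully claimed by the resume jump of the \emph{next} iteration (whose $d_{y'}$ is measured along a path consisting precisely of those new copies).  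Without a concrete injection from units of $\sum d_y$ into distinct DFS up-steps, the potential does not telescope and the bound is unproved.

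A route that avoids this difficulty is to observe that at the moment a Move is invoked, every node on the path from the current root to $x$ has DFS-Status $\geq 1$: the segment from the root down to the last resume point consists of nodes created in the most recent Split (status $2$), and the segment from there to $x$ is the DFS stack (status $1$).  One can then charge each rotation to a unit of $\beta$-potential on the path and argue, via the existing $\beta$-bookkeeping (Lemma~\ref{lem:betadecrease} and Theorem~\ref{thm-split-covered}), that these units are ultimately paid for by DFS steps --- which is essentially the informal claim the paper is making.
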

\begin{proof}
    All nodes with maximum-size bags are found in a round using the DFS-Instance. The maximum number of rotations that can occur are then, at most, a constant factor times the number of steps in the DFS-Instance.
\end{proof}

\subsection{Merge}

The unique home property is essential for Algorithm~\ref{alg1}. To maintain this property, \emph{main nodes} are merged if they aren't the home of some vertex, i.e., given $x, y \in V_{\mathcal{T}}^M$ such that $x = p(p(y))$ and $B_y \subseteq B_x$, $y$ is merged into $x$. A Merge operation is performed bottom-up as described in Algorithm~\ref{algMerge}.

\begin{algorithm}[hbt!]
    \caption{Merge($\mathcal{T}$, $x$, $y$, $A_{\mathcal{T}}$)}\label{algMerge}
    \begin{algorithmic}
        \State Let $I_x = \{ i \mid i \in V_{\mathcal{T}}^I , x = p(i) \}$ and $I_y = \{i \mid i \in V_{\mathcal{T}}^I , y = p(i) \}$
        \State Let node $z \in I_x$ such that $z = p(y)$
        \If{$\exists y' \neq y$ such that $z = p(y')$}
            \State Subtract $y$ from $z$
        \Else
            \State Delete $z$ from $I_x$
        \EndIf
        \For{$z_y \in I_y$}
            \If{$\exists z_x \in I_x$ such that $B_{z_x} = B_{z_y}$}
                \For{Partition $P = (W_1, W_2, W_3, X)$ of $B_{z_x}$ and $B_{z_y}$}
                    \If{$P$ not legal}
                        \State $s(A_{z_x}[P]) = \bot$
                    \Else
                        \State $s(A_{z_x}[P]) = s(A_{z_x}[P]) + s(A_{z_y}[P]) - |X|$
                    \EndIf
                \EndFor
            \Else
                \State Add $z_y$ as a child of $x$
            \EndIf
        \EndFor
        \State Delete $y$ and any intersection nodes of $y$
        \State Return $\mathcal{T}$
    \end{algorithmic}
\end{algorithm}

The running time of Alg~\ref{algMerge} is determined by the number of complete DP-table updates required. Notice that a full DP-table is only updated when intersection nodes overlap between $x$ and $y$, i.e. $I_x \cap I_y$. Consider a function $\gamma(\mathcal{T}) = c_{\gamma}|V_{\mathcal{T}}|$, for some $c_{\gamma} > 0$. The number of DP-table recomputations performed during the Merge can be counted using $\gamma(\mathcal{T})$.

\begin{lemma}\label{lem-merge-dec}
    Given a \gtd $\mathcal{T}$ that is transformed into a \gtd $\mathcal{T}'$ by a Merge, $\gamma(\mathcal{T}) - \gamma(\mathcal{T}')$ bounds the number of DP-tables recomputed during the Merge.
\end{lemma}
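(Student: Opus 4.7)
The plan is to equate, up to a constant factor, the number of full DP-table recomputations performed by Algorithm~\ref{algMerge} with the number of nodes removed in going from $\mathcal{T}$ to $\mathcal{T}'$. Inspecting the pseudocode, the only block that rewrites an entire DP-table is the inner loop over partitions of $B_{z_x}$ that fires each time some $z_y \in I_y$ matches an existing $z_x \in I_x$ with $B_{z_x} = B_{z_y}$; there is additionally a single subtraction of $y$ from $z$ that contributes at most one further table update. Let $m$ denote the number of matched pairs $(z_x, z_y)$. Then the total number of DP-table recomputations performed by the Merge is $m + O(1)$.

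Next, I account for the net change in the node count. The main node $y$ is always deleted. Each matched $z_y$ is folded into the corresponding $z_x$ by the DP-table update and then deleted together with $y$. The intersection node $z$ linking $x$ and $y$ is removed when $y$ was its only child. No node is ever created: every $z_y \in I_y$ that has no sibling match in $I_x$ is merely re-parented to become an intersection child of $x$, and the conditional guarding this branch ensures that property $(5)$ of a grouped tree decomposition (distinct bags among sibling intersection nodes) is preserved after re-parenting. Consequently $|V_{\mathcal{T}}| - |V_{\mathcal{T}'}| \geq m + 1$.

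Putting the two counts together gives $\gamma(\mathcal{T}) - \gamma(\mathcal{T}') = c_\gamma(|V_{\mathcal{T}}| - |V_{\mathcal{T}'}|) \geq c_\gamma(m+1)$, which dominates the $m + O(1)$ DP-table recomputations once $c_\gamma$ is chosen as a sufficiently large absolute constant. The main obstacle is the bookkeeping in the second paragraph: one must check that the "re-parent $z_y$ to $x$" branch really neither creates new nodes nor breaks the invariants of the grouped tree decomposition, so that the deletion count is a genuine lower bound on $|V_{\mathcal{T}}| - |V_{\mathcal{T}'}|$. Given the explicit match/no-match dichotomy in the inner loop, this check is direct, and the lemma then follows.
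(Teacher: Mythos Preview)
Your proposal is correct and follows essentially the same approach as the paper: pair each DP-table recomputation with a deleted node (the matched $z_y$'s with the table updates at the $z_x$'s, and the deletion of $y$ with the remaining $O(1)$ update), so that the drop in $\gamma$ dominates the number of recomputations. Your bookkeeping is in fact slightly more careful than the paper's, explicitly invoking $c_\gamma$ to absorb the constant and checking that the re-parenting branch neither creates nodes nor breaks invariant~(5).
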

\begin{proof}
    A main node $y$ is merged into its grandparent $x$ if $B_y \subseteq B_x$. As described in Alg~\ref{algMerge}, a DP-table is recomputed at every node $i \in I_x$ such that there exists a node $j \in I_y$ (where $I_z$ denotes the intersection node children of a node $z$) that satisfies $B_i = B_j$. 
    Observe that a node is deleted for each update during the merge, i.e., $j \in I_y$ for each $A_i$ updated, and $y$ for $A_x$. The deleted nodes decrease the $\gamma$ function. Thus $\gamma(\mathcal{T}) - \gamma(\mathcal{T}')$ bounds the number of complete DP-table updates that occur during a Merge.
\end{proof}

\subsection{Split}

A Split operation involves finding a \emph{split} (Def~\ref{def:split}) of $\mathcal{T}$ where the root $r$ has a bag of maximum size $w+1$, and using it to replace $r$ with nodes of smaller bag size. We describe the procedure in Algorithm~\ref{algSplit}, but the operation comprises the following three stages:
\begin{enumerate}
    \item Finding a split $P = (C_1, C_2, C_3, S)$ of $\mathcal{T}$: This is done top-down in $\mathcal{T}$.
    \item Using $P$ to create up to three copies ($x_1, x_2, x_3$) of every editable node (Def~\ref{def:editable}) $x$ in $\mathcal{T}$, with each $B_{x_i}$ containing only vertices in $C_i \cup S$.
    \item Preserving the unique home property of \gtds and creating a new root node $r'$ such that $B_{r'} = S$.
\end{enumerate}

It is important to note that the tree traversal stops further exploration when it reaches a (intersection) node that is not editable, i.e., it only explores the editable tree. This is crucial for maintaining the linear running time. Yet again, the running time of the Split operation is bounded by the number of table updates performed, which we bound using the potential functions $\alpha(\mathcal{T})$ and $\delta(\mathcal{T})$.


\begin{algorithm}[ht!]
    \caption{UpdateTables($x$, $x_i$, $P^g$, $\mathcal{T}$, $\mathcal{T}_i$, $S$)}\label{algUpdTbl}
    \begin{algorithmic}
        \If{$x$ is an intersection node}
            \For{$y \in $ editable children of $x_i$}
                \If{there exists $v \in B_y, v \notin B_{x_i}, v \in S$}
                    \State Create node $x_i^y$ such that $B_{x_i^y} = B_{x_i}$ and set $p(x_i^y) = p(x_i)$
                    \For{$v \in B_y, v \notin B_{x_i}, v \in S$}
                        \State Set $B_{x_i^y} = B_{x_i}^y \cup \{v\}$
                    \EndFor
                    \State Add $y$ to $x_i^y$
                \Else
                    \State Add $y$ to $x_i$
                \EndIf
            \EndFor
            \If{$\exists\, x_i'$ s.t. $B_{x_i'} = B_{x_i}$ and $p(x_i') = p(x_i)$, and $A_{x_i}$ has been updated}
                \State Add $x_i$ to $x_i'$, move all children of $x_i$ to $x_i'$, delete $x_i$
            \EndIf
            \State \Return  
        \EndIf
        \For{$y \in $ editable children of $x$}
            \State Subtract $y$ from $x$
        \EndFor
        \For{ $i \in a(x)$ and every $P_i = (W_1^i, W_2^i, W_3^i, X^i)$ of $B_{x_i}$}
            \State Let $Q = B_x \cap (W_j \cup W_k)$ for chosen good partition $P^g = (W_1, W_2, W_3, X)$ of $B_x$
            \State Let $P$ be the partition $(W_1^i, W_2^i, W_3^i, X^i\cup Q)$ of $B_x$
            \State $s(A_{x_i}[P_i]) = s(A_x[P]) - |Q|$
        \EndFor
        \For{$y \in $ editable children of $x_i$}
            \State Add $y$ to $x_i$
            \State Let $X^y = S \, \cap B_y $ be the separator of $\mathcal{T}_y$ at $y$
            \For{$v \in X^y, v \not \in B_{x_i}$}
                \State Add $v$ to $B_{x_i}$
                \For{$P = (W_1^i, W_2^i, W_3^i, X^i)$ of $B_{x_i}\setminus \{v\}$}
                    \State Set $P_j' = P$ with $\{v\}$ added to $W_j^i$
                    \For{$P_j'$, $j\in [3]$}
                        \If{ $P_j'$ is a legal partition}
                            \State $A_{x_i}[P_j'] = A_{x_i}[P]$
                        \Else
                            \State $A_{x_i}[P'] = \bot$
                        \EndIf
                    \EndFor
                    \State $s(A_{x_i}[P']) = s(A_{x_i}[P]) + 1$, for $P' = (W_1^i, W_2^i, W_3^i, X^i \cup \{v\})$
                \EndFor
            \EndFor
        \EndFor
    \end{algorithmic}
\end{algorithm}

After all editable nodes $x$ are edited, Algorithm~\ref{algUpdTbl} updates the tables at each copy $x_i$, while avoiding any expensive recomputations caused by the non-editable children. It does so by utilizing the table at $x$ after subtracting all editable children, i.e., when $A_x$ only has the contribution of all non-editable children of $x$. Let $B_x = B_{x_i} \cup Q$, where $Q = \bigcup_{j \in a(x), j \neq i} W_{j}^x$. For any partition $P' = (W_1', W_2', W_3', X')$ of $B_{x_i}$, let $P$ be the partition $(W_1', W_2', W_3', X' \cup Q)$ of $B_x$. By setting $s(A_{x_i}[P']) = s(A_x[P]) - |Q|$ we retrieve the table at $x_i$ in time proportional to the table size times $k$. This is significantly faster than having to add an unbounded number of non-editable nodes to $x_i$, and is done at most three times for each editable node $x$. All the editable children of $x_i$ are then added to $x_i$, which is an expensive operation covered by the $\alpha$ potential of $\mathcal{T}$ (Thm~\ref{thm-split-covered}). Finally, separator vertices present in $V_{\mathcal{T}_{x_i}}$ are added to $x_i$.

\begin{algorithm}[hbt!]
    \caption{Split($r$, $\mathcal{T}$)}\label{algSplit}
    \begin{algorithmic}
        \State Let $S = \emptyset$. Begin a tree walk of $\mathcal{T}$ at root $r$
        \While{at an editable $x\in V_{\mathcal{T}}$}
            \If{$x = r$, i.e., $x$ is the root node}
                \State For every $i \in a(x)$, create a tree $\mathcal{T}_i$ with a single node $x_i$ such that $B_{x_i} = (C_i \cap B_x) \cup S$.
                \State Mark $x$ as pre-visited, and continue.
            \EndIf
            \If{$x$ is unvisited}
                \State Pick a good partition $P^x = (W_1^x, W_2^x, W_3^x, X^x)$ of $B_x$ w.r.t $p(x)$ if $x \neq r$.
                \State Update $S = S \cup X^x$
                \If{$x$ is editable}
                    \State Create copies $x_i$ of $x$ s.t. $B_{x_i} = W_i^x \cup X^x$ for all $i \in a(x)$ 
                    \State Link $x_i$ to $p(x)_i$ in $\mathcal{T}_i$, if $x \neq r$
                \Else
                    \State For $a(x) = \{i\}$, move $x$ and $\mathcal{T}^x$ to $\mathcal{T}_i$, by linking it to $p(x)_i$. 
                    \State(Note that $\mathcal{T}_i$ is the copy of $\mathcal{T}$ that contains vertices of $C_i$)
                    \State Mark $x$ as post-visited and return
                \EndIf
            \ElsIf{$x$ is pre-visited}
                \For{$i \in a(x)$}
                    \State UpdateTables($x$, $x_i$, $P^x$ $\mathcal{T}$, $\mathcal{T}_i$, S)
                \EndFor
            \EndIf
        \EndWhile
        \State Create new node $r'$ such that $B_{r'} = \bigcup_{x \in V_{\mathcal{T}}} X^x$. For $i \in a(r)$ add $r_i$ as grandchildren of $r'$, and add corresponding intersection nodes
        \State Begin a new tree traversal of the new tree $\mathcal{T}'$ from $r'$
        \While{at a node $x \in V_{\mathcal{T}'}$ that is not post-visited} 
            \State Merge $x$ into $p(x)$ if it violates the unique home property
        \EndWhile
    \end{algorithmic}
\end{algorithm}

\begin{lemma}\label{lem-editableIntersectionParent}
    An editable intersection node can only have editable main node children.
\end{lemma}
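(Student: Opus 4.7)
The plan is to combine two structural consequences of the grouped tree decomposition definition with the compatibility of induced partitions. First, property $(3)$ of a grouped tree decomposition forces every intersection node to have a main-node parent and every non-root main node to have an intersection-node parent; contrapositively, no child of an intersection node can itself be an intersection node, so every child of an intersection node must already be a main node. This observation reduces the lemma to showing that every main-node child of an editable intersection node is itself editable.

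Second, I would invoke property $(4)$ of the grouped tree decomposition on the grandparent triple $u = p(w)$ and $w = p(v)$, where $w \in V_{\mathcal{T}}^I$ and $u, v \in V_{\mathcal{T}}^M \setminus \{r\}$. This yields $B_w = B_u \cap B_v$, and in particular $B_w \subseteq B_v$. Because the split $P$ induces partitions $P_w$ on $B_w$ and $P_v$ on $B_v$ that are compatible on their common support (here all of $B_w$), each part $W_i^w$ is a subset of the corresponding part $W_i^v$ for $i \in \{1,2,3\}$. If $w$ is editable, then by Definition~\ref{def:editable} there exist distinct indices $i,j$ with $W_i^w$ and $W_j^w$ both non-empty; their supersets $W_i^v \supseteq W_i^w$ and $W_j^v \supseteq W_j^w$ are then also non-empty, so $v$ is editable.

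I do not anticipate any real obstacle; the argument is a short unwinding of definitions. The one subtle point to be careful about is that the inclusion $B_w \subseteq B_v$ is not a generic tree decomposition property but a direct consequence of property $(4)$ of the grouped structure. In an arbitrary tree decomposition, a child's bag need not be a superset of (or even related by inclusion to) its parent's bag, so the compatibility-of-induced-partitions step relies crucially on the grouped hypothesis.
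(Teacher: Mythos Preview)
Your proposal is correct and follows essentially the same argument as the paper: both use $B_w = B_u \cap B_v$ (property~(4) of the grouped tree decomposition) to obtain $B_w \subseteq B_v$, and then propagate the two non-empty components $W_i^w, W_j^w$ to $W_i^v, W_j^v$ via compatibility of induced partitions. You are slightly more explicit than the paper in first arguing from property~(3) that every child of an intersection node is a main node, which the paper simply assumes by writing ``let $y$ and $z$ be main nodes.''
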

\begin{proof}
    Let $x$ be an editable intersection node. Let $y$ and $z$ be main nodes such that $x = p(y)$ and $z = p(x)$. By the definition of \gtd, if $x$ is the intersection node between main nodes $y$ and $z$, then it must hold that $B_z \cap B_y = B_x$. Since $x$ is editable, $B_x$ has a non-empty intersection with at least two different components $C_i$ and $C_j$ (for $i \neq j$). Since $B_x \subseteq B_y$, $B_y$ also intersects $C_i$ and $C_j$, and therefore, $y$ is editable.
    Thus, all main node children of an editable intersection node must be editable. 
\end{proof}

\begin{lemma}\label{lem-NonEditableRoot}
    On any path from the root to a leaf in $\mathcal{T}$, after a split has been chosen, the first non-editable node we encounter is an intersection node.
\end{lemma}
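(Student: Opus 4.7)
The plan is to proceed by contradiction, leveraging the previously proven Lemma~\ref{lem-editableIntersectionParent} about editable intersection nodes. The target statement rules out a main node being the first non-editable vertex on a root-to-leaf path, so the natural strategy is to assume such a main node $x$ exists and derive a contradiction using the alternating main/intersection structure of a grouped tree decomposition.

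First I would argue that the root $r$ itself is editable. Since a Split is only invoked when $|B_r| = w+1$ is the maximum bag size, and because a valid split (Def~\ref{def:split}) applied at $r$ requires the induced partition $P_r = (W_1^r, W_2^r, W_3^r, X^r)$ to have at least two non-empty components $W_i^r$ (this is exactly the ``good at the root'' requirement noted after Def~\ref{def:goodpart}), the root satisfies the editability criterion of Def~\ref{def:editable}. Hence no root-to-leaf path can begin with a non-editable node, and any first non-editable node on such a path is strictly below $r$.

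Next I would exploit the bipartite alternation between main and intersection nodes in a \gtd. Suppose, for contradiction, that on some path the first non-editable node is a main node $x \neq r$. By property~(3) of the definition of a \gtd, the parent $p(x)$ must be an intersection node. Because $x$ is the \emph{first} non-editable node on the path, $p(x)$ is editable. Now $x$ is a main-node child of the editable intersection node $p(x)$, so Lemma~\ref{lem-editableIntersectionParent} applies and forces $x$ to be editable, contradicting the choice of $x$.

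The argument is essentially a short structural deduction, so there is no real ``hard part'' beyond verifying that the root genuinely is editable whenever a Split is performed; once that is in place, Lemma~\ref{lem-editableIntersectionParent} closes the case immediately. I would therefore be careful, in the write-up, to cite the definition of \emph{split} (in particular the property that the partition at $r$ has at least two non-empty $W_i^r$) when asserting that $r$ is editable, since this is the one point where the proof depends on more than just the bipartite parent/child structure of the \gtd.
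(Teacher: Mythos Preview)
Your argument is correct and is essentially the same as the paper's: the paper's proof consists solely of the sentence ``This is an immediate consequence of Lemma~\ref{lem-editableIntersectionParent},'' and your write-up simply unpacks that consequence (including the easy check that the root itself is editable). There is no substantive difference in approach.
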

\begin{proof}
    This is an immediate consequence of Lemma~\ref{lem-editableIntersectionParent}.
\end{proof}

Lemma~\ref{lem-editableIntersectionParent} and Lemma~\ref{lem-NonEditableRoot} both imply that one need only search through the \textit{intersection node} children of a main node to perform the split. There is never any need to handle the potentially unbounded number of main node children of a non-editable intersection node. Conversely, we must go through all the main node children of an editable intersection node since they are all editable. However, the $\alpha$ potential function pays for this operation (Thm~\ref{thm-split-covered}).

\begin{theorem}\label{thm-split-covered}
    The $\alpha$ and $\delta$ potential of a \gtd $\mathcal{T}$ bounds the number of table updates performed during Split operations in a round of Algorithm~\ref{alg1}. Furthermore, the $\alpha-$potential bounds the increase in the $\beta$ and $\gamma$-potential during the Split.
\end{theorem}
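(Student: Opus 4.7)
The plan is to prove the two claims of the theorem separately, each by a charging argument that attributes the relevant cost to the potential contributions of the nodes directly involved in the Split.

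For the first claim (bounding the number of table updates by $\alpha + \delta$), I would enumerate the table operations executed by Algorithms~\ref{algSplit} and~\ref{algUpdTbl} and classify them into three groups: (i) the Subtraction and re-Addition of editable children when constructing each copy $x_i$ of an editable main node $x$; (ii) the retrieval of $A_{x_i}$ from $A_x$ via the formula $s(A_{x_i}[P_i]) = s(A_x[P]) - |Q|$, which requires a single pass over $A_{x_i}$ per copy; and (iii) the expansion of $B_{x_i}$ with the separator vertices $X^y \setminus B_{x_i}$ for each editable child $y$. Operations in (i) and (ii) are a constant number per editable main node per copy, and I would charge them to the $\alpha$-contribution of $x$: because $x$ is editable, at least two of $W_1^x, W_2^x, W_3^x$ are non-empty, so $|B_{x_i}| \leq |B_x| - 1$ for every copy, forcing $\sum_{i \in a(x)} |B_{x_i}|\, f(x_i)$ to be strictly smaller than $|B_x|\, f(x)$ by a constant fraction and releasing the required budget. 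Operations in (iii) cost $O(k)$ table updates per editable node with a maximum-size bag, and I would pay for them using the $\delta$-term $c_\delta k^2 |V_{\mathcal{T}}^{\max}|$, whose $\Theta(k^2)$ budget per maximum-bag node comfortably covers the $O(k)\cdot O(k)$ separator-addition cost.

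For the second claim (bounding the increase in $\beta$ and $\gamma$ by $\alpha$), the plan is to count the nodes created by a Split. Each editable main node yields up to three copies, and the construction of the new root $r'$ with $B_{r'} = S$ introduces an $O(k)$-length chain of main and intersection nodes. Each new node contributes at most $c_\gamma$ to $\gamma$ and at most $2c_\beta$ to $\beta$ (since it is initially unvisited). I would then choose $c_\alpha$ large enough relative to $c_\beta$ and $c_\gamma$ so that the $\alpha$-contribution released by replacing each editable main node $x$ with its smaller copies dominates the $\beta + \gamma$ increase attributed to $x$. The $O(k)$ chain of new nodes near the new root is covered by the special $\alpha$-contribution of the old root, $c_\alpha|B_r|(|B_r|+1)/2 = \Omega(k^2)$, which alone dwarfs that contribution.

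The main obstacle will be the careful bookkeeping for $\alpha$ during the Split, since replacing $x$ by its copies also changes the $f$-values of nearby descendants (their grandparent's bag has changed). I plan to address this by walking the editable sub-tree top-down and establishing a local invariant at each editable $x$: the combined $\alpha$-contribution of $x$ and its editable main-node grandchildren strictly decreases by at least a constant fraction of $c_\alpha|B_x|$, sufficient to fund both the table updates and new-node creations charged to $x$. Lemma~\ref{lem-editableIntersectionParent} ensures that this recursion terminates precisely at non-editable intersection nodes, whose main-node descendants are untouched by the Split and therefore incur no additional cost.
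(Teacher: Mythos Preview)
Your charging scheme is in the right spirit, but it misses the key observation that makes the $\alpha$-accounting work. You argue that editability forces $|B_{x_i}| \le |B_x|-1$ for every copy and then claim $\sum_{i\in a(x)} |B_{x_i}|\,f(x_i) < |B_x|\,f(x)$ ``by a constant fraction.'' That inequality does not follow from the bag-size bound alone: with up to three copies the left side could be as large as $3(|B_x|-1)$. The paper's argument instead uses the \emph{unique home} property of the grouped tree decomposition: the non-root main node $x$ is the home of exactly one vertex $v$, and after the split $v$ lands either in a single $W_j^x$ (so only $x_j$ is the home of anything and every other $x_i$ has $f(x_i)=0$ and will be merged away) or in $X^x$ (so \emph{no} copy is the home of $v$, since separator vertices migrate to the new root). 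Hence at most one copy contributes to $\alpha$, and its contribution is $c_\alpha|B_{x_j}| < c_\alpha|B_x|$. This is also why your identified ``main obstacle'' about $f$-values changing for descendants is a red herring: in a grouped tree decomposition $f(y)=1$ for every non-root main node, before and after the split, so no delicate top-down invariant is needed.

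Your allocation of the $\delta$-budget is also off. You charge the separator-vertex insertions of group~(iii) to $\delta$, restricting them to ``editable nodes with a maximum-size bag.'' But \textsc{UpdateTables} performs those insertions at \emph{every} editable main node as separator vertices propagate toward the root, and the number of editable nodes is not bounded by $|V_{\mathcal{T}}^{\max}|$; the $c_\delta k^2 |V_{\mathcal{T}}^{\max}|$ budget is therefore insufficient for that purpose. In the paper, $\delta$ is reserved solely for the work at the \emph{new root}: creating $r'$ with $B_{r'}=S$, the $O(k)$ forget-node chain beneath it, and endowing these with enough $\alpha$-potential. All per-editable-node table work, including the bounded number of separator insertions at each such node, is absorbed by the unit drop in $\alpha$ guaranteed by the unique-home argument above (via a sufficiently large $c_\alpha$).
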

\begin{proof}
    \noindent
    Consider a grouped tree decomposition $\mathcal{T}$. First we show that $\alpha(\mathcal{T})$ before a Split operation is strictly greater than $\alpha(\mathcal{T'})$, where $\mathcal{T'}$ is the tree decomposition obtained after all editable nodes in $\mathcal{T}$ are edited. Let $x$ be a non-root editable main node in $\mathcal{T}$ that is home to some vertex $\{v\}$ and is edited into $x_i$ for all $i \in a(x)$. The $\alpha$ potential of $x$ is $c_{\alpha} \, |B_x|$. Let $\{v\}$ be assigned to some component $W_j^x$ for $j\in a(x)$. Then, $x_j$ is the only copy of $x$ that has $\alpha$ potential, since for all $i \neq j, \, i \in a(x)$, $x_i$ is not the home of any vertex. By Def~\ref{def:split}, $|B_{x_j}| < |B_x|$, i.e., the $\alpha$ potential of $x_j$ is strictly less than that of $x$, for all non-root main nodes $x$ in $\mathcal{T}$. Now let $\{v\}$ be assigned to $X^x$. At the end of a Split, all separator vertices have their home in the root node. Thus no $x_i$ for $i \in a(x)$ has any $\alpha$ potential.     
    
    Let $r$ be the root of $\mathcal{T}$, with maximum size bag $B_r$, that is home to $|B_r|$ vertices. $r$ is split into $r_i$ for all $i\in a(r)$. If the unique home property was preserved at the root as well, there would exist a series of forget nodes above $r$ until the root of $\mathcal{T}$ had a bag of size one. As seen above, the $\alpha$ potential of these forget nodes and $r$ would also decrease in their copies in $\mathcal{T}'$. Since the unique home property is not preserved at the root, we give $r$ enough $\alpha$ potential to cover the $|B_r|$ forget nodes that would be required to preserve the property. This is covered by an appropriate factor in $f(x)$. Thus the $\alpha$ potential of $\mathcal{T}$ is strictly greater than the $\alpha$ potential of $\mathcal{T}'$. This decrease (multiplied by the constant factor in $c_{\alpha}$) pays for updating the tables of all the copies of an editable node and their intersection node parents.

    Each editable main node in $\mathcal{T}$ creates up to $6$ nodes in $\mathcal{T}'$ when it is edited -- i.e., its copies and their intersection node parents. These added nodes increase both the $\beta$ and $\gamma$ potential. The increase in $\gamma$ potential is $6\,c_{\gamma}$, while the increase in $\beta$ potential is $12\,c_{\beta}$ -- to account for the DFS-Status of the unvisited nodes. Since this is always a constant increase, it can be accounted for in $c_{\alpha}$. Thus the decrease in the $\alpha$ potential, covers the increase in the $\beta$ and $\gamma$ potentials during a round as well. The last step in a Split is to create the new root $r'$ with $B_{r'} = S$. Each copy $r_i$ of the original root $r$ is home to $|B_{r_i}|$ vertices. To maintain the unique home property, if $|B_{r_i} \setminus S| > 1$, we create a chain of forget nodes and their intersection nodes between $r_i$ and $r'$, each of which need to be given $\alpha$ potential. The bags of these nodes are decreasing in size, with the largest being of size $O(k)$, and we have at most $O(k)$ such nodes. Furthermore, the new root $r'$ would need enough potential to be home to $|S|$ vertices. The $\delta$ potential with $c_{\delta} k^2$ given to all maximum size bags, including $r$, covers this cost.
    Thus, the $\alpha$ and $\delta$ potentials bound the work done during a Split operation, and the merge performed at the end of a Split, to maintain the unique home property, is paid for by the $\gamma$ potential.
\end{proof}

\subsection{Analysis}\label{sec:analysis}
\begin{theorem}\label{thm-phi-accts-num}
    In each round of the algorithm the Potential Function $\phi(\mathcal{T})$ bounds the number of DP-table re-computations performed, and $\mathcal{O}^*(4^w \phi(\mathcal{T}))$ bounds the number of steps, where $w$ is the width of $\mathcal{T}$.
\end{theorem}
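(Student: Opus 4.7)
The plan is to account separately for the four kinds of DP-table-affecting work in a round---ordinary DFS traversal, Move, Merge, and Split---and show that each is amortized by a decrease in an appropriate component of $\phi(\mathcal{T}) = \alpha(\mathcal{T}) + \beta(\mathcal{T}) + \gamma(\mathcal{T}) + \delta(\mathcal{T})$. First I would establish the sharper statement that the initial value of $\phi(\mathcal{T})$ bounds the total number of DP-table recomputations performed in the round; since each recomputation touches a table of size $O(4^{|B_x|}) = O^*(4^w)$, multiplying by this factor then yields the claimed bound on the number of elementary steps.

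For the DFS instance and Move operations I would argue jointly. By Lemma~\ref{lem:betadecrease}, each elementary DFS step decreases $\beta(\mathcal{T})$ by $1$, so the number of traversal steps in a round is at most the initial $\beta(\mathcal{T})/c_\beta$. Lemma~\ref{lem:boundRotations} bounds the total number of Move rotations by a constant times the DFS step count, and Lemma~\ref{lem-move-nochange} guarantees that $\alpha(\mathcal{T})$ is invariant under Moves. Each rotation triggers only a constant number of table updates plus an $O(k)$ chain of new forget nodes whose bag sizes decrease by $1$ from the new root's, so their table sizes sum to at most $4/3$ of the new root's table. Choosing $c_\beta$ large enough, every DFS- and Move-induced recomputation is charged to the initial $\beta$.

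For Merge I would invoke Lemma~\ref{lem-merge-dec} verbatim: each Merge destroys at least as many nodes as the tables it recomputes, so the total Merge cost across the round is bounded by the total decrease in $\gamma$. For Split, Theorem~\ref{thm-split-covered} already supplies the key accounting: $\alpha(\mathcal{T})$ and $\delta(\mathcal{T})$ together pay for every DP-table update performed during a Split, and furthermore the increases in $\beta$ and $\gamma$ caused by the up to six new nodes per editable node (the at most three copies and their intersection parents) are absorbed by the same $\alpha$ decrease. The $k^2$ factor in $\delta$ covers the re-rooting with a new separator root $r'$ and the associated $O(k)$ chain of forget nodes above each surviving copy.

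Summing the four accounts, every DP-table recomputation is charged against a net decrease of $\phi$, so the total number of recomputations in a round is at most the initial $\phi(\mathcal{T})$; multiplying by $O^*(4^w)$ gives the step count. The main obstacle, and where I would spend the most care, is calibrating the four constants so that all the amortization inequalities hold simultaneously: $c_\alpha$ must dominate both the $c_\beta$- and $c_\gamma$-increases induced by a single Split as well as the per-copy table-recomputation cost; $c_\beta$ must absorb the $4/3$ table-size blow-up along Move chains together with the constant-per-step recomputation cost; and $c_\delta k^2$ must cover the re-rooting chain after a Split. Once a consistent choice of $c_\alpha \gg c_\beta, c_\gamma$ and $c_\delta$ large enough in $k^2$ is fixed, the amortization is uniform across rounds and the theorem follows.
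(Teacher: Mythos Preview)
Your proposal is correct and follows essentially the same route as the paper: you invoke Lemmas~\ref{lem:betadecrease} and~\ref{lem:boundRotations} (together with Lemma~\ref{lem-move-nochange}) to charge DFS and Move work to $\beta$, Lemma~\ref{lem-merge-dec} to charge Merges to $\gamma$, and Theorem~\ref{thm-split-covered} to charge Splits (and the induced increases in $\beta$ and $\gamma$) to $\alpha$ and $\delta$, then multiply by the table size $O^*(4^w)$. The only cosmetic difference is that you fold the constant-calibration discussion into the proof itself, whereas the paper defers it to a separate remark.
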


\begin{proof}    
    Thm~\ref{thm-split-covered} proves that the number of DP-table updates performed during all Split operations in a round is bounded by $\alpha(\mathcal{T}) + \delta(\mathcal{T})$. From Lemma~\ref{lem-move-nochange}, we know that the $\alpha$-potential does not change during a Move operation. Lemma~\ref{lem:betadecrease} and Lemma~\ref{lem:boundRotations} show that the number of rotation operations performed on a tree $\mathcal{T}$ can be counted using the $\beta$-potential. Furthermore, Thm~\ref{thm-split-covered} proves that any increase in the $\beta$-potential is paid for by the $\alpha$ and $\delta$ potentials. A Merge operation reduces the number of nodes in $\mathcal{T}$. Lemma~\ref{lem-merge-dec} shows that $\gamma(\mathcal{T})$ bounds the number of merge operations that may be performed per round. Thm~\ref{thm-split-covered} proves that any increase in the $\gamma$-potential is covered by the $\alpha$ and $\delta$ potentials.

    Lastly, since we only initialize a Split at a node with a maximum size bag, the number of SPLIT operations initiated in a round is at most the number of nodes with maximum size bags in $\mathcal{T}$. This is because a single Split may decrease the size of multiple maximum size bags. Furthermore, splitting a maximum size bag also requires the creation of a new root node whose bag contains the $\mathcal{O}(k)$ separator vertices.
    
    The $c_{\delta} k^2$ factor -- required to give $\mathcal{O}(k)$ potential for each of the additional $\mathcal{O}(k)$ vertices whose home is the new root -- accounts for the potential given to the new root node and the chain of forget nodes from the $r_i$'s to the new root, and it accounts for the work needed to construct these nodes. This gives us the last term of the potential function $\delta(\mathcal{T}) = c_{\delta} k^2 |V_{\mathcal{T}}^{max}|$. Thus, $\phi(\mathcal{T})$ bounds the number of DP-table updates done in each round of the algorithm. 
\end{proof}

\begin{remark}
    \textbf{About the constants:} We briefly describe the interactions between the constants $c_{\alpha}$, $c_{\beta}$, $c_{\gamma}$, and $c_{\delta}$. Since the $\alpha$ and $\delta$ potentials must cover the increase in the $\beta$ and $\gamma$ potentials after a split, $c_{\alpha}$ must be large enough to accommodate the added $c_{\beta}$, $c_{\gamma}$, and $c_{\delta}$, i.e., $c_{\alpha} > \max \{ c_{\beta}, c_{\gamma}, c_{\delta}$ \}. Similarly, $c_{\delta}$ must be large enough to give potential to the new root node, the copies of the old root node, and all the intersection and forget nodes added at the root to maintain the \gtd structure. $c_{\beta}$ must account for the updates made to the intersection nodes during a move, in addition to the main nodes, and $c_{\gamma}$ must be large enough to also pay for handling the intersection node parent of the main node being merged.
 \end{remark}

The algorithm begins with a tree decomposition $\mathcal{T}$ that has a $4$-approximation of the treewidth $k$. Let the size of the maximum size bag in $\mathcal{T}$ be $b$, such that $b \leq 4k+4$. 
When $b \in [3k+4, 4k+4]$, a $4$-partition split, i.e., $(C_1, C_2, C_3, S)$, creates DP-tables of size $\mathcal{O}(4^{4k}) = \mathcal{O}(256^k)$. However, note that it suffices to simply create a $3$-partition split \cite{Korhonen2021}, i.e., $(C_1, C_2, S)$, and let each $W_i^r$ to be of size at most $\frac{2}{3}\, |B_r|$ in any split, which needs DP-tables of size $\mathcal{O}(3^{4k}) = \mathcal{O}(81^k)$.
When $b \in [2k+3, 3k+3]$, creating a $4$-partition split, i.e., $(C_1, C_2, C_3, S)$, creates DP-tables of size $\mathcal{O}(4^{3k}) = \mathcal{O}(64^k)$. Since this is smaller than the time when $b \in [3k+4, 4k+4]$, our running time is determined by the former case.


\begin{theorem}\label{thm:runningTime}
    Algorithm~\ref{alg1} has a running time of $\mathcal{O}^*(81^k)n$.
\end{theorem}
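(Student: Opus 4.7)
The plan is to combine the per-round bound from Theorem~\ref{thm-phi-accts-num} with an explicit estimate of the potential $\phi(\mathcal{T})$, together with the regime-dependent choice of split type (3-partition for the largest bags, 4-partition for the moderate ones) described in the paragraph preceding the theorem. First I would verify that throughout the algorithm the invariant $|V_{\mathcal{T}}| = \mathcal{O}(n)$ is maintained: by the unique-home property, the number of main nodes is at most $n$, and each main node admits at most a constant-factor number of intersection children relevant to the analysis because merges continuously prune redundant nodes. Under this invariant, a straightforward term-by-term estimate gives $\alpha(\mathcal{T}) = \mathcal{O}(k\, n)$ (each non-root main node contributes $|B_x|\cdot f(x) = \mathcal{O}(k)$ with $f(x)=1$ under the unique-home property, and the root contributes an additional $\mathcal{O}(k^2)$), $\beta(\mathcal{T}) = \mathcal{O}(n)$, $\gamma(\mathcal{T}) = \mathcal{O}(n)$, and $\delta(\mathcal{T}) = \mathcal{O}(k^2\, n)$, so $\phi(\mathcal{T}) = \mathcal{O}(k^2\, n)$.

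By Theorem~\ref{thm-phi-accts-num}, a single round at maximum bag size $w+1$ performs at most $\phi(\mathcal{T})$ DP-table recomputations, each of cost proportional to the DP-table size at the nodes being updated. A naive bound using $4^{|B_x|}$ entries summed over the $\mathcal{O}(k)$ rounds $w = 4k+3, 4k+2, \ldots, 2k+2$ yields only $\mathcal{O}^*(4^{4k})\, n = \mathcal{O}^*(256^k)\, n$, which is too weak. The refinement is to split the analysis into two regimes. For $w \in [3k+4,\, 4k+3]$ we employ the 3-partition split $(C_1, C_2, S)$ with the balance constraint $|W_i^r| \leq \tfrac{2}{3}|B_r|$, so that DP-tables range over $3^{|B_x|}$ entries and each round costs $\mathcal{O}^*(3^w)\, n \leq \mathcal{O}^*(3^{4k})\, n = \mathcal{O}^*(81^k)\, n$. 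For $w \in [2k+2,\, 3k+3]$ we revert to the usual 4-partition split, and each round costs $\mathcal{O}^*(4^w)\, n \leq \mathcal{O}^*(4^{3k+3})\, n = \mathcal{O}^*(64^k)\, n$. Taking the maximum across regimes produces the per-round bound $\mathcal{O}^*(81^k)\, n$, and summing over the $\mathcal{O}(k)$ rounds is absorbed by the $\mathcal{O}^*$ notation.

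Finally, the preprocessing step---constructing the grouped tree decomposition via Lemma~\ref{l2}---costs only $\mathcal{O}(k^2\, n)$, which is again absorbed into $\mathcal{O}^*(81^k)\, n$. The main obstacle I anticipate is the clean justification that the 3-partition regime still fits the potential-function framework established earlier: specifically, that the balanced-separator argument behind Lemma~\ref{lem:split2k} adapts to guarantee a 3-partition split satisfying $|W_i^r| \leq \tfrac{2}{3}|B_r|$ whenever $|B_r| > 2(k+1)$, and that the bookkeeping in Theorem~\ref{thm-split-covered} (where $\alpha$ and $\delta$ cover the increases in $\beta$ and $\gamma$) carries over verbatim when the split has only two non-separator components rather than three. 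Once these are in place, combining the regime bounds and invoking Theorem~\ref{thm-phi-accts-num} yields a total running time of $\mathcal{O}^*(81^k)\, n$, as claimed.
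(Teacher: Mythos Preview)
Your proposal is correct and follows essentially the same approach as the paper: estimate $\phi(\mathcal{T}) = \mathcal{O}(k^2 n)$, invoke Theorem~\ref{thm-phi-accts-num} for the per-round cost, and split into the 3-partition regime (tables of size $3^{4k}$) versus the 4-partition regime (tables of size $4^{3k}$), with the former dominating at $\mathcal{O}^*(81^k)n$. One small correction: your justification of $|V_{\mathcal{T}}| = \mathcal{O}(n)$ via ``at most a constant-factor number of intersection children per main node'' is not accurate (a main node may have up to $2^{|B_x|}-1$ intersection children), but the bound still holds because every intersection node is the parent of at least one main node, and main nodes number at most $n$ by the unique-home property.
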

\begin{proof}
    The running time of the algorithm is bounded by the most expensive operations performed, i.e., recomputations of DP-tables. $\phi(\mathcal{T})$ bounds the number of such DP-table recomputations that occur in a single round of the algorithm [Thm~\ref{thm-phi-accts-num}]. Furthermore $\phi(\mathcal{T}) = \mathcal{O}(k^2 \, n)$.
    When the width of $\mathcal{T}$, is in the range $[2k+2, 3k+2]$, the size of the DP-table is $\mathcal{O}(4^{3k})$. 
    Updating all entries in the table does not incur a further cost.
    Correspondingly, the amount of work done per round is $\mathcal{O}( k^2\, 4^{3k}\, n) = \mathcal{O}(64^k\,k^2\, n)$, and the algorithm runs for at most $\mathcal{O}(k)$ rounds. However the bag sizes are smaller in each subsequent round -- giving us a geometrically decreasing potential function that can be bound by the very first term in the series. This gives us a running time of $\mathcal{O}(k^2 \, 64^k\, n)$, which is $\mathcal{O}^*(64^k)n$. When $w \in [3k+3, 4k+3]$, the size of the DP-table should be $3^{4k}$, which gives us a running time of $\mathcal{O}(k^2\, 81^k\, n)$, which is $\mathcal{O}^*(81^k)n$. Thus, Algorithm~\ref{alg1} has a running time of $\mathcal{O}^*(81^k)n$.
\end{proof}

\section{Conclusion}

There have been quite a few FPT algorithms to produce tree decompositions, with running time linear in $n$ \cite{Korhonen2021, bodlaender1996linear, korhonen2022improved, bodlaender2016c}, and the focus has been shifting towards improving both the dependency on $k$ and the approximation ratio. 
In this paper, we present an $\mathcal{O}^*(81^k)n$ time 2-approximation of treewidth, which is a significant improvement on Korhonen's $\mathcal{O}^*(1728^k)n$-time 2-approximation algorithm~\cite{Korhonen2021}.
Unlike \cite{Korhonen2021}, our running time bottleneck occurs when the width is in the range $[3k+3, 4k+3]$, i.e. when we create a $4$-partition. It would be interesting to see if that case could be further optimized. In light of the recent work by Korhonen and Lokshtanov \cite{korhonen2022improved}, one could look into whether the $(1+\epsilon)$ approximation can be achieved in time $\mathcal{O}(c^k n)$ for some constant $c$. 

\newpage

\bibliography{references}

\begin{thebibliography}{10}

\bibitem{arnborg1987complexity}
Stefan Arnborg, Derek~G Corneil, and Andrzej Proskurowski.
\newblock Complexity of finding embeddings in a $k$-tree.
\newblock {\em SIAM Journal on Algebraic Discrete Methods}, 8(2):277--284, 1987.

\bibitem{BelbasiF2021leftmost}
Mahdi Belbasi and Martin F{\"{u}}rer.
\newblock Finding all leftmost separators of size $\leq k$.
\newblock In {\em Combinatorial Optimization and Applications - 15th International Conference, {COCOA} 2021}, LNCS 13135, pages 273--287. Springer, 2021.
\newblock \href {https://doi.org/10.1007/978-3-030-92681-6\_23} {\path{doi:10.1007/978-3-030-92681-6\_23}}.

\bibitem{BelbasiF2022}
Mahdi Belbasi and Martin F{\"{u}}rer.
\newblock An improvement of {R}eed's treewidth approximation.
\newblock {\em J. Graph Algorithms Appl.}, 26(2):257--282, 2022.
\newblock \href {https://doi.org/10.7155/jgaa.00593} {\path{doi:10.7155/jgaa.00593}}.

\bibitem{bodlaender1996linear}
Hans~L Bodlaender.
\newblock A linear-time algorithm for finding tree-decompositions of small treewidth.
\newblock {\em SIAM Journal on computing}, 25(6):1305--1317, 1996.

\bibitem{bodlaender2016c}
Hans~L Bodlaender, P{\aa}l~Grøn{\aa}s Drange, Markus~S Dregi, Fedor~V Fomin, Daniel Lokshtanov, and Micha{\l} Pilipczuk.
\newblock A $c^k n$ 5-approximation algorithm for treewidth.
\newblock {\em SIAM Journal on Computing}, 45(2):317--378, 2016.

\bibitem{branchwidth}
Fedor~V. Fomin and Tuukka Korhonen.
\newblock Fast {FPT}-approximation of branchwidth.
\newblock In {\em Proceedings of the 54th Annual ACM SIGACT Symposium on Theory of Computing}, STOC 2022, page 886–899, New York, NY, USA, 2022. Association for Computing Machinery.
\newblock \href {https://doi.org/10.1145/3519935.3519996} {\path{doi:10.1145/3519935.3519996}}.

\bibitem{korhonen2023singleexponential}
Tuukka Korhonen.
\newblock A single-exponential time 2-approximation algorithm for treewidth.
\newblock {\em SIAM Journal on Computing}, 0(0):FOCS21--174--FOCS21--194, 0.
\newblock \href {https://doi.org/10.1137/22M147551X} {\path{doi:10.1137/22M147551X}}.

\bibitem{Korhonen2021}
Tuukka Korhonen.
\newblock A single-exponential time 2-approximation algorithm for treewidth.
\newblock In {\em 62nd {IEEE} Annual Symposium on Foundations of Computer Science, {FOCS}}, pages 184--192, 2021.
\newblock \href {https://doi.org/10.1109/FOCS52979.2021.00026} {\path{doi:10.1109/FOCS52979.2021.00026}}.

\bibitem{korhonen2022improved}
Tuukka Korhonen and Daniel Lokshtanov.
\newblock An improved parameterized algorithm for treewidth.
\newblock In {\em Proceedings of the 55th Annual ACM Symposium on Theory of Computing}, STOC 2023, page 528–541, New York, NY, USA, 2023. Association for Computing Machinery.
\newblock \href {https://doi.org/10.1145/3564246.3585245} {\path{doi:10.1145/3564246.3585245}}.

\bibitem{robertson1984graph}
Neil Robertson and Paul~D Seymour.
\newblock Graph minors. {III}. {P}lanar tree-width.
\newblock {\em Journal of Combinatorial Theory, Series B}, 36(1):49--64, 1984.

\bibitem{RobertsonS86}
Neil Robertson and Paul~D. Seymour.
\newblock Graph minors {II}. {A}lgorithmic aspects of tree-width.
\newblock {\em J. Algorithms}, 7(3):309--322, 1986.
\newblock URL: \url{http://dx.doi.org/10.1016/0196-6774(86)90023-4}, \href {https://doi.org/10.1016/0196-6774(86)90023-4} {\path{doi:10.1016/0196-6774(86)90023-4}}.

\end{thebibliography}
\newpage
\appendix

\end{document}